\newtheorem{theorem}{Theorem}
\newtheorem{lemma}{Lemma}
\newtheorem{corollary}{Corollary}
\newtheorem{proposition}{Proposition}
\newtheorem{assumption}{Assumption}
\def\E{\mathbb{E}}
\def\phi{\varphi}
\def\SNR{\mathsf{SNR}}
\def\l{\left}
\def\r{\right}
\def\({\left(}
\def\){\right)}
\def\bff{{\mathbf{f}}}
\def\bh{{\mathbf{h}}}
\def\bs{{\mathbf{s}}}
\def\bv{{\mathbf{v}}}
\def\bx{{\mathbf{x}}}
\def\by{{\mathbf{y}}}
\def\b0{{\mathbf{0}}}
\def\bA{{\mathbf{A}}}
\def\bB{{\mathbf{B}}}
\def\bC{{\mathbf{C}}}
\def\bP{{\mathbf{P}}}
\def\bQ{{\mathbf{Q}}}
\def\bX{{\mathbf{X}}}
\newcommand{\nn}{\nonumber}
\begin{document}

\title{\huge \setlength{\baselineskip}{30pt} Event-Driven Optimal Feedback Control for Multi-Antenna Beamforming}
\author{\large \setlength{\baselineskip}{15pt}Kaibin Huang, Vincent K. N. Lau, and Dongku Kim \thanks{
K. Huang and D. Kim are with the School of Electrical and Electronic Engineering, Yonsei University, 262 Seongsanno, Seodaemun-gu,
Seoul 120-749, Korea. V. K. N. Lau is with the Department of Electronic and Computer Engineering, Hong Kong University of Science and Technology, Clear Water Bay, Hong Kong. Email: huangkb@yonsei.ac.kr, eeknlau@ust.hk, dkkim@yonsei.ac.kr.
}\vspace{-40pt}}

\maketitle
\begin{abstract}\setlength{\baselineskip}{15pt}
Transmit beamforming  is a simple multi-antenna technique for increasing throughput and the transmission range of a wireless communication system. The  required feedback of channel state information (CSI) can potentially result in excessive overhead especially for high mobility or many antennas. This work concerns efficient  feedback for transmit beamforming and establishes a new approach of  controlling  feedback for maximizing net throughput, defined as  throughput minus average feedback cost. The feedback controller using a stationary policy turns CSI feedback on/off according to the system state that comprises the channel state and transmit beamformer.  Assuming channel isotropy and Markovity, the controller's state reduces  to two scalars. This  allows the optimal control policy to be efficiently  computed  using dynamic programming.   Consider the perfect feedback channel free of error, where each feedback instant pays a fixed price.  The corresponding optimal feedback control policy is proved to be of the threshold type. This result holds regardless of whether the controller's state space is discretized or continuous. Under the threshold-type policy, feedback is performed whenever a state variable indicating the accuracy of   transmit CSI is below a threshold, which varies with channel power. The practical finite-rate feedback channel is also considered. The optimal policy for quantized feedback is proved to be also of the threshold type. The effect of CSI quantization is shown to be equivalent to an increment on the feedback price. Moreover, the increment  is upper bounded by the expected logarithm of one minus the quantization error.  Finally,  simulation shows that feedback control  increases net throughput  of the conventional periodic feedback by up to $0.5$ bit/s/Hz without requiring additional bandwidth or antennas.

\end{abstract}

\begin{keywords}
 Array signal processing, stochastic optimal control, feedback communication, time-varying channels, dynamic programming, Markov processes
\end{keywords}

\section{Introduction}\label{Section:Intro}
Transmit beamforming is a popular multi-antenna technique for enhancing the reliability and throughput of a wireless communication link \cite{PaulrajBook}. In many systems, transmit beamforming requires feedback of channel state information (CSI), incurring significant overhead especially for a large number of transmit antennas or fast fading \cite{LovHeaETAL:WhatValuLimiFee:Oct:2004,Love:OverviewLimitFbWirelssComm:2008}. In this paper,  we consider the transmit beamforming system and propose a new  approach of maximizing \emph{net throughput}, defined as throughput minus average feedback cost,  via optimal feedback control. The controller under consideration turns CSI feedback on/off by observing the current system state that consists of the current channel state and transit beamformer.  The optimal stationary control policy is shown to be of the threshold type. As a result, feedback is performed whenever  transmit CSI is sufficiently outdated as measured by an optimal threshold function. Optimal feedback control is observed to substantially  increase net throughput of the transmit beamforming system compared with the conventional periodic feedback \cite{LovHeaETAL:GrasBeamMultMult:Oct:03, LovHeaETAL:WhatValuLimiFee:Oct:2004,MukSabETAL:BeamFiniRateFeed:Oct:03}. 

%--------------------------------------------------
\subsection{Prior Works}
In multi-antenna systems, adaptive  transmission techniques such as beamforming and precoding typically require periodic feedback of complex vectors or matrices derived from CSI. The potentially large feedback overhead has motivated active research on intelligent algorithms for quantizing feedback CSI, forming a research area called \emph{limited feedback} \cite{Love:OverviewLimitFbWirelssComm:2008}. Different approaches for quantizing CSI  have been proposed, including line packing \cite{LovHeaETAL:GrasBeamMultMult:Oct:03,MukSabETAL:BeamFiniRateFeed:Oct:03}, combined channel parameterization and scalar quantization \cite{Roh:EffFbMIMOParameter:2007}, subspace interpolation \cite{ChoiMondal:InterpPrecodSpaMuxMIMOOFDMLimFb:2006}, and Lloyd's algorithm \cite{Xia:AchieveWelchBound:05, Lau:MIMOBlockFadingFbLinkCapConst:04}. Furthermore, various types of limited feedback systems have been designed, namely  beamforming \cite{LovHeaETAL:GrasBeamMultMult:Oct:03, MukSabETAL:BeamFiniRateFeed:Oct:03}, precoded orthogonal space-time block codes \cite{LoveHeath:LimitedFeedbackPrecodOSTBC:05}, precoded spatial multiplexing \cite{LoveHeath:LimitedFeedbackPrecodSpatialMultiplex:05},
and multiuser downlink \cite{Gesbert:ShiftMIMOParadigm:2007}. The practicality of limited feedback has been recognized by the industry and related techniques have been integrated into latest wireless communication standards such as  IEEE 802.16 \cite{80216} and 3GPP LTE \cite{3GPP-LTE}. 

Besides quantization, CSI feedback can be compressed by exploiting  channel temporal correlation 
\cite{Roh:EffFbMIMOParameter:2007, Banister03, Huang:LimFbBeamformTemporallyCorrChan:2009}.  In \cite{Roh:EffFbMIMOParameter:2007}, each CSI matrix for a  multiple-input-multiple-output (MIMO) channel is parameterized and the parameters are sent back incrementally using the delta modulation. In \cite{Banister03}, the feedback CSI  matrix is compressed to be one bit indicating the channel variation with respect to a reference matrix sent by the transmitter. A lossy feedback compression algorithm is proposed in \cite{Huang:LimFbBeamformTemporallyCorrChan:2009}, which reduces  feedback overhead by omitting  in feedback the infrequent transitions between CSI states. In view of prior works, it remains unknown that how the average feedback cost can be minimized for given  throughput.

 The applications of \emph{opportunism} \cite{Knopp:InfoCapPwrCtrlMu:1995,VisTseETAL:OppoBeamUsinDumb:Jan:02}  to CSI feedback  have resulted in \emph{opportunistic feedback} algorithms for reducing sum feedback overhead in multi-user multi-antenna systems \cite{TangHeath:OppFbDLMuDiv:2005, TangHeath:OppFbMuMIMOLinearRx:2005, Sanayei:OppBeamLimitedFb:2005, Huang:SDMASumFbRate:06, SwannackWorenell:MIMOBroadcastISIT:2006}. The common feature of these algorithms is that CSI feedback is performed only if a channel quality indicator exceeds a fixed threshold. Compared with periodic feedback over dedicated channels 
 (see e.g. \cite{LovHeaETAL:GrasBeamMultMult:Oct:03,MukSabETAL:BeamFiniRateFeed:Oct:03}), opportunistic (aperiodic) feedback is much more efficient in terms of sum feedback overhead and thus is suitable for systems  where users randomly access a common feedback channel. The thresholds for opportunistic feedback can be computed iteratively for maximizing throughput as in \cite{TangHeath:OppFbDLMuDiv:2005, TangHeath:OppFbMuMIMOLinearRx:2005} or derived in closed-form expressions for achieving optimal capacity scaling for asymptotically large numbers of users \cite{Sanayei:MUDiv1BitFb:2005, Sanayei:OppBeamLimitedFb:2005, Huang:SDMASumFbRate:06, SwannackWorenell:MIMOBroadcastISIT:2006}.  For simplicity, the temporal correlation in practical channels  is omitted in existing designs where independent block fading is assumed. Thus the existing opportunistic feedback algorithms are incapable of adapting feedback thresholds to channel dynamics for further feedback reduction.

The common objective of the works mentioned above is to maximize throughput. This performance metric fails to account for feedback cost though feedback competes with data transmission for resources including time, bandwidth and power. Thus net throughput  defined earlier is a more practical metric. In \cite{Love:DuplexDistortLimFbMIMO:2006, Yeung:OptimTradeoffTwoWayLimFbBeam:2009, Xie:OptimalBWAllocDataFeedbackMISO:2006}, net throughput is maximized by  optimizing the  resource allocation to data transmission and feedback. In \cite{Love:DuplexDistortLimFbMIMO:2006}, a two-way beamforming system is considered, where data and CSI flow in  both directions of the link between two multi-antenna transceivers. For this system, bounds on the feedback rate for maximizing net throughput are derived. For a similar system, net throughput is maximized in \cite{Yeung:OptimTradeoffTwoWayLimFbBeam:2009} by optimizing power allocation to training, feedback and data transmission. Net throughput optimization for the beamforming system is also investigated in \cite{Xie:OptimalBWAllocDataFeedbackMISO:2006} in terms of optimal bandwidth allocation to feedback and data transmission. Aligned with the direction of prior works, the current paper addresses net throughput maximization for transmit beamforming from the new perspective of feedback control, which adapts  the mentioned resource allocation to channel dynamics.

\subsection{Contributions and Organization}

In this paper, we consider a single-user transmit beamforming system with multiple transmit and a single receive antennas.  Each feedback instant incurs fixed cost in bit/s/Hz, called \emph{feedback price}. A feedback controller turns the feedback link either on or off such that net throughput is maximized. This work is based on the following assumptions. First, channel realizations form a stationary Markov chain. Second, the channel coefficients are i.i.d. complex Gaussian random variables. This assumption allows the state of the feedback controller to reduce to  two scalars  $g$ and $z$ without compromising the controller's optimality. The parameter $g$ is the channel power and $z$ the squared cosine of the angle between  the transmit beamformer and the channel vector. Large $z$ indicates accurate transmit CSI and vice versa \cite{LovHeaETAL:GrasBeamMultMult:Oct:03,MukSabETAL:BeamFiniRateFeed:Oct:03}. Finally, the distribution of $z$ in the next slot conditioned on a realization $a$ in the current slot is assumed to \emph{stochastically dominate} the counterpart conditioned on  
$b \leq a$ \cite{Bawa75:OptimRulesOrderingProspects}. Essentially,  this assumption implies that $z$ being large in a slot likely  remains large in the next slot. 

The contributions of this paper are summarized as follows. In general, the paper establishes a new approach for controlling feedback for transmit beamforming to maximize net throughput. To efficiently compute the optimal control policy using dynamic programming (DP) \cite{Bertsekas07:DynamicProg},  the state space of the controller, namely the product space of $(g, z)$, is quantized.\footnote{This quantization differs from that for finite-rate feedback considered in Section~\ref{Section:LimFb} and thus has no effect on the quality of feedback CSI.} Consider the perfect feedback channel free of feedback error.   First, given the quantized state space, the feedback control policy for maximizing net throughput is proved to be of the threshold type. Specifically, feedback is performed only if $z$ is below the optimal threshold that depends on $g$. 
Second, the threshold type policy is proved to be optimal for feedback control with the continuous (unquantized) state space. Next, we consider the finite feedback channel that requires feedback CSI quantization. Fourth, the optimality of the threshold-type feedback control policy is proved for quantized feedback.   Feedback CSI quantization reduces  the receive SNR and also varies the dynamics of $z$. Fifth, to gain insight into these two effects, they are treated separately and each of them is shown to decrease net throughput. Finally, we show that the effect of CSI quantization on net throughput  can be interpreted as an increment on the  feedback price. This increment is upper bounded by the expected logarithm of one minus the quantization error. 

Simulation results are also presented for the channel model  specified by i.i.d. Rayleigh fading and Clarke's temporal correlation. Define the \emph{feedback gain} as throughput for free feedback minus that for no feedback. With respect to periodic feedback, optimal controlled feedback is observed to increase the feedback gain by up to $0.5$ bit/s/Hz, equal to $24\%$ of the feedback gain. The increase in net throughput is insensitive to the variation on Doppler frequency and the number of transmit antennas. For both perfect and imperfect feedback channels, the optimal feedback control policies computed numerically are observed to exhibit the threshold structure as predicted analytically. Moreover, the feedback threshold decreases with the increasing feedback price, corresponding to less frequent feedback.  Last,  feedback quantization is observed to  reduce ergodic throughput as well as the feedback threshold, decreasing the feedback frequency. 

The remainder of this paper is organized as follows. The system model is described in Section~\ref{Section:System}. The optimal feedback control policies for the perfect and finite-rate feedback channels are analyzed in Section~\ref{Section:PefectFb} and \ref{Section:LimFb}, respectively. Simulation results are presented in Section~\ref{Section:Sim} followed by concluding remarks in Section~\ref{Section:Conclusion}.

{\bf Notation}: A matrix is represented by a boldface capitalized letter and a vector by   a boldface small letter.  The $(m,n)$th element of  a matrix $\bX$ is represented by  $[\bX]_{m,n}$. For a vector $\bx$, $[\bx]_m$ gives the $m$th element. The superscript $\dagger$  denotes the complex conjugate transpose operation on a matrix or a vector.  Define the operator $(a)^+$ on a scalar $a$ as $(a)^+:= \max(0, a)$. The realization of a stochastic process in the $t$th time slot is specified  by the subscript $t$.

\section{System Model} \label{Section:System}
\begin{figure}[t]
\begin{center}
\includegraphics[width=15cm]{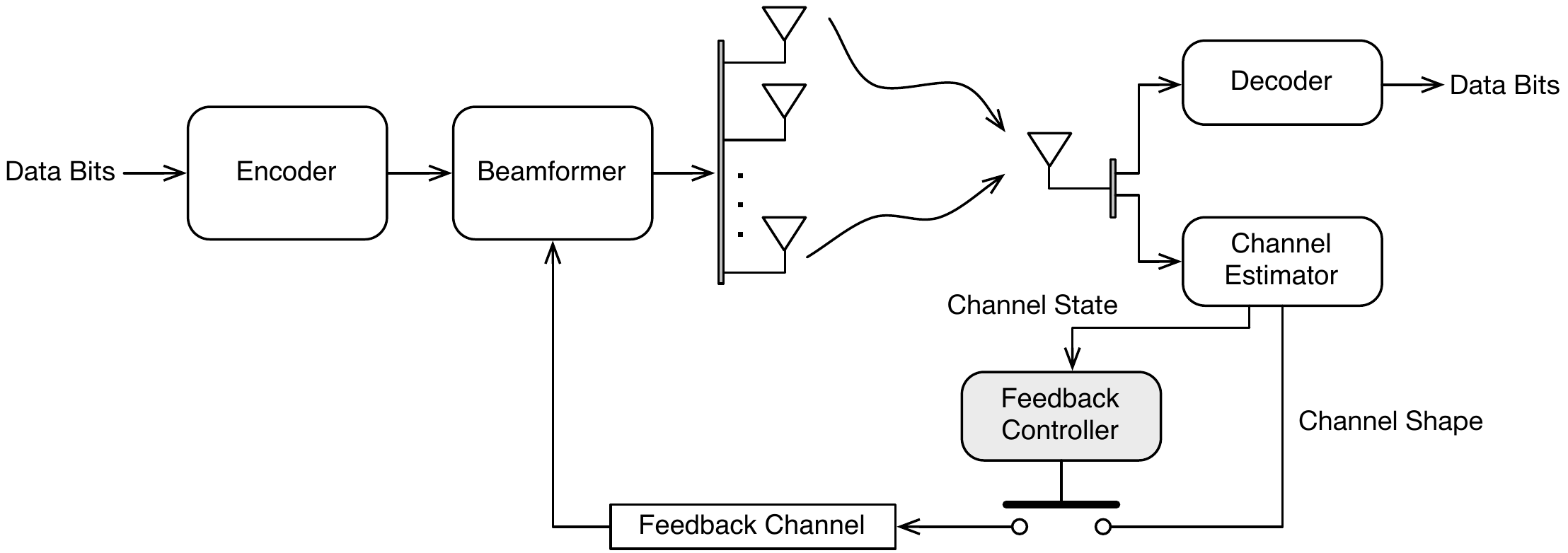}
\caption{Transmit beamforming system with controlled CSI (channel shape) feedback}
\label{Fig:Sys}
\end{center}
\end{figure}

We consider the transmit beamforming system illustrated in Fig.~\ref{Fig:Sys}, where a transmitter with $L$ antennas transmits to a receiver with  a single antenna. The frequency-flat channel is a $L\times 1$ complex vector denoted as $\bh$. To facilitate our designs, $\bh$ is decomposed into the channel power $g := \|\bh\|^2$ and the \emph{channel shape} $\bs := \bh/\|\bh\|$, which are the indicators of the channel quality and direction, respectively. It follows that $\bh = \sqrt{g}\bs$. It is well-known that applying $\bs$ as the beamforming vector, denoted as $\bff$, maximizes the receive signal-to-noise ratio (SNR) \cite{PaulrajBook}. To this end, $\bs$ is estimated by the receiver at the beginning of each time slot of $T_c$ seconds  and communicated to the transmitter via the feedback channel. \footnote{Besides CSI bits, feedback contains an extra bit identifying the feedback instant if the feedback channel is assigned to a single user or multi-bit user identity if multiple users share the feedback channel.} The channel is assumed constant within each slot and thus feedback is performed at most once per slot. \footnote{This requires that $T_c$ is shorter than channel coherence time.}  Depending on the channel state $\bh$ (or $g$ and $\bs$), the feedback controller turns CSI feedback on/off at the beginning of each time slot. Let $\mathcal{U}:=\{0,1\}$ denote the control state space and $\mu \in\mathcal{U}$ the feedback decision, where $1$ and $0$ correspond to the on and off states of the feedback link. Define the controller's state $x := (g, \bs, \bff)$ that contains all system variables affecting net throughput obtained in the sequel. Thus the state space is $\mathcal{X}:=\mathds{R}^+\times\mathds{O}^L\times\mathds{O}^L$ where $\mathds{O}^L$ represents the unit hypersphere embedded in $\mathds{C}^L$ \cite{MukSabETAL:BeamFiniRateFeed:Oct:03}.  We consider a \emph{stationary} feedback control policy $\mathcal{P}:\mathcal{X}\rightarrow\mathcal{U}$ independent of the slot index $t$ \cite{Bertsekas07:DynamicProg}. 
It is assumed that per usage of the feedback channel incurs the feedback cost of $B$ bit. \footnote{The parameter $B$ measures the equivalent number of data bits that can be transmitted reliably using the resources allocated to one-time feedback.  Feedback CSI is delay sensitive and thus cannot be protected by strong error correcting codes as their decoding delay is too long. Therefore, feedback CSI is typically transmitted using larger power and lower-order modulation than those for data transmission. As a result, the communication cost of one CSI bit is higher than that of one data bit ($B>1$). } Moreover, the transmission time of feedback CSI is assumed negligible.\footnote{In practice, feedback CSI is treated as control signals and transmitted in the header that occupies a small fraction of each slot.  This justifies the omission of CSI transmission time. } We consider long  data codewords covering many channel realizations. Given channel ergodicity  and stationary feedback control, the net throughput in bit/s can be written as \cite{Love:DuplexDistortLimFbMIMO:2006}
\begin{equation}
R =\frac{1}{T_s}\E\l[\log_2(1+ Pg|\bs^\dagger \bff|^2)\r] - \frac{B}{T_c}\Pr(\mu = 1) 
\end{equation}
where $P$ is the transmit SNR and $T_s$ the symbol duration. For simplicity, net throughput can be written in bit/s/Hz as  
\begin{equation}\label{Eq:Throughput}
J =\E\l[\log_2(1+ Pg|\bs^\dagger \bff|^2)\r] - \alpha\Pr(\mu = 1) 
\end{equation}
where $\alpha := \frac{BT_s}{T_c}$ is called the feedback price. \footnote{The value of $\alpha$ is large if power allocated to feedback or the number of channel coefficients are large, or the symbol rate is low and vice versa.}

The feedback controller controls the transmit beamformer via feedback and thereby influences the receive SNR.  Define $z := |\bs^\dagger \bff|^2$ that represents the controllable component of the receive SNR $\SNR_r = Pgz$ \cite{LovHeaETAL:GrasBeamMultMult:Oct:03, MukSabETAL:BeamFiniRateFeed:Oct:03}. Consider the perfect feedback channel. The temporal variation of $z$ under feedback control is illustrated in Fig.~\ref{Fig:Control:Z}. Upon CSI feedback, $\bff$ is updated with $\bs$ and the value of $z$ is reset to the maximum of one; if feedback is turned off, $z$ is smaller than one due to that $\bff$ fails to adapt instantaneously to the time varying $\bs$. With $\bff$ fixed, the probability density function (PDF) of $z$ is referred to as the \emph{uncontrolled PDF} and denoted as $\check{f}(z\mid \bff)$. The uncontrolled PDF governs the dynamics of $z$ between two consecutive feedback instants (cf. Fig.~\ref{Fig:Control:Z}). The random variable $z$ is used later as a controller state variable. In addition, besides the mentioned perfect feedback channel, the one with a finite-rate constraint is  also considered in the sequel. 

\begin{figure}[t]
\begin{center}
\includegraphics[width=12cm]{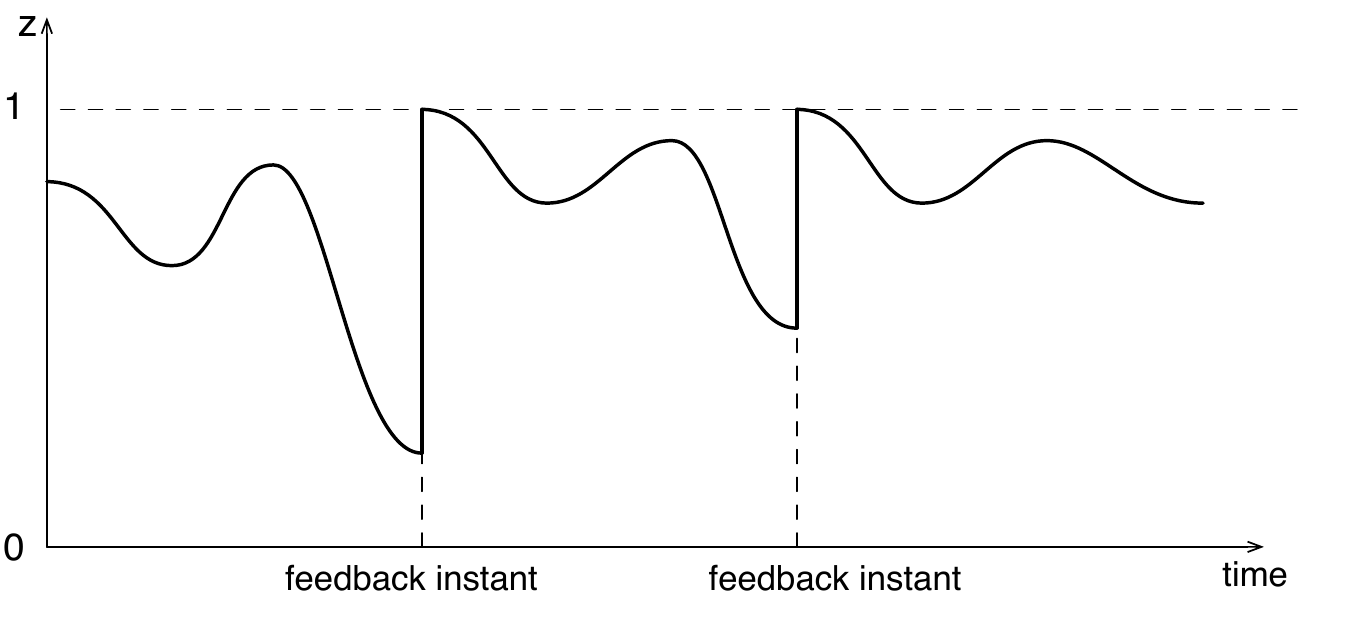}
\caption{Temporal variation of $z$ under feedback control }
\label{Fig:Control:Z}
\end{center}
\end{figure}

We make several assumptions on the channel distribution as described shortly. These assumptions facilitate  computing   the optimal feedback control policy $\mathcal{P}^\star$ using DP \cite{Bertsekas07:DynamicProg} and analyzing  the policy structure. Model the channel as a stochastic sequence denoted as $\bh_0, \bh_1, \bh_2,\dots$, where $\bh_t$ is the channel state  in the $t$th slot. 
\begin{assumption}\label{AS:Markov}
The sequence $\bh_0, \bh_1, \bh_2,\dots$ is a stationary  Markov chain.
\end{assumption}
\noindent In other words, given $\bh_n$, $\bh_{n+1}$ is independent of the past realizations $\bh_{n-1}, \bh_{n-2}, \cdots$.  Markov chains are commonly used for modeling temporally-correlated wireless channels (see e.g., \cite{Wang:FinStatMarkovChan:95, Pimentel:FiniteMarkovModelCorrRicianFading:2004,  Zhang:MarkovModelRayleighFading:1999, Ivanis:AdpMIMOMRCImpCSIMarkov:2007, Kuo:MarkovMIMOCondNumbAntSel:2007}). Markov channel models have been validated both analytically  (see e.g., \cite{Wang:FinStatMarkovChan:95,Sadeghi:CapAnalysisMarkovFlatFading:2005}) and by measurement \cite{Daniels08:ThputDelayMeasurementLimFb:2008}. Next, the controller's state space $\mathcal{X}$  have $(2L+1)$ dimensions. Due to the curse of dimensionality, computing $\mathcal{P}^\star$ is impractical if $L$ is large \cite{Bertsekas07:DynamicProg}. The following assumption overcomes this difficulty, which is commonly made in the literature (see e.g. \cite{TarokhJafETAL:SpacBlocCodeFrom:Jul:99,Tel:CapaMultGausChan:99}). 
\begin{assumption}\label{AS:Gauss}
The channel $\bh$ comprises i.i.d. $\mathcal{CN}(0,1)$ random variables. 
\end{assumption}
\noindent Given this assumption, $g$ follows chi-square distribution with $L$ complex degrees of freedom \cite{Lo:MaxRatioTx:99}; $\bs$ is isotropic. As a result,  the uncontrolled distribution of $z$ is independent of $\bff$ and hence we can write $\check{f}(z\mid \bff)$ as  $\check{f}(z)$. \footnote{Given Assumption~\ref{AS:Gauss}, the uncontrolled distribution  of $z$ conditioned on an arbitrary $\bff$ is  $\Pr(z \geq \tau) = (1-\tau)^{L-1}$ \cite{MukSabETAL:BeamFiniRateFeed:Oct:03,YeungLove:RandomVQBeamf:05}.}
It follows that the state variables $\bs$ and $\bff$ can be combined into $z$. Thus the controller's state and state space  reduce to $x = (g, z)$ and $\mathcal{X} = \mathds{R}^+\times\mathcal{Z}$ with $\mathcal{Z}:=[0,1]$, respectively, thereby overcoming the mentioned curse of dimensionality. We make the following  assumption on the temporal correlation of $z$ that affects the structure of $\mathcal{P}^\star$ (cf. Section~\ref{Section:PefectFb}). 
\begin{assumption}\label{AS:SD}
For $1\geq a\geq  b \geq 0$, the uncontrolled distribution of $z_{t+1}$ conditioned on $z_t = a$ is  stochastically dominant \cite{Bawa75:OptimRulesOrderingProspects} over  that conditioned on $z_t = b$. Mathematically,
\begin{equation}
\int_c^\infty \check{f}(z_{t+1} \mid z_t=a)dz_{t+1} \geq \int_c^\infty \check{f}(z_{t+1} \mid z_t = b)dz_{t+1}\nn
\end{equation}
where $0\leq c\leq 1$.
\end{assumption}
\noindent This assumption essentially states that large  $z_t$ likely leads to large  $z_{t+1}$ and vice versa, which is reasonable given channel temporal correlation. This work requires no assumption on the temporal correlation of $g$. 

Finally, let $f(z_{t+1}\mid z_t, \mu_t)$  and $\tilde{f}(g_{t+1}\mid g_t)$ denote the transition PDF's of $z$ and $g$ respectively. For convenience, the state transition PDF is written as $f_x := \tilde{f}\times f$.

%-----------------------------------------
\section{Problem Formulation}\label{Section:ProbForm}
In this section, the problems of optimal feedback control are formulated and  solved in the subsequent sections. Both perfect and finite-rate  feedback channels are considered in the problem formulation. 

The generic average and discounted reward problems are defined as follows \cite{Bertsekas07:DynamicProg}. By abuse of notation, the symbols in the preceding section are reused here. Consider a dynamic system with an infinite number of stages (infinite horizon), a state space $\mathcal{X}$ and a control space $\mathcal{U}$. The system dynamics are specified by the state transition kernel $f_x(x_t\mid x_{t}, \mu_t)$ with $x\in\mathcal{X}$ and $\mu\in\mathcal{U}$. The reward-per-stage is represented by the function $G: \mathcal{X}\times\mathcal{U}\rightarrow\mathds{R}^+$.  The control policy $\mathcal{P}:\mathcal{X}\rightarrow\mathcal{U}$ is optimized for maximizing either the average reward
\begin{equation}\label{Eq:AvAward:PerfFB}
J := \lim_{T\rightarrow\infty}\frac{1}{T} \sum_{t=0}^{T-1}  G(x_t, \mu_t)
\end{equation}
or the discounted reward 
\begin{equation}\label{Eq:DistAward:PerfFB}
J_\beta(x_0) :=  \sum_{t=0}^{\infty}  \beta^t\E[G(x_t, \mu_t)\mid x_0]
\end{equation}
where $0< \beta < 1$ is the discount factor. 
This corresponding optimization problems are called the  infinite-horizon average and discounted  reward problems represented by  $\mathcal{A}(\mathcal{X}, \mathcal{U}, f_x, G)$ and $\mathcal{D}(\mathcal{X}, \mathcal{U}, f_x, G)$, respectively. These problems can be solved iteratively using DP \cite{Bertsekas07:DynamicProg}.

Consider the perfect feedback channel. Net throughput in \eqref{Eq:Throughput} can be written as the average reward in \eqref{Eq:AvAward:PerfFB} with $G$ given by 
\begin{equation}\label{Eq:RewardPerStage}
G(g, z, \mu) := \l\{\begin{aligned}
&\log_2(1+Pg)-\alpha,&& \mu = 1\\
&\log_2(1+Pgz),&& \textrm{otherwise}.
\end{aligned}\right. 
\end{equation}
Hereafter the terms,  average reward and net throughput, are used interchangeably. 
Thus the feedback controller  can be designed by solving  $\mathcal{A}(\mathcal{X}, \mathcal{U}, f_x, G)$ with $f_x$ obtained as follows
\begin{eqnarray}
f_x(x_{t+1}\mid x_t, \mu_t) &=& f_x((g_{t+1}, z_{t+1})\mid (g_t, z_t)) \nn\\
&=& \tilde{f}(g_{t+1}\mid z_{t+1}, g_t, z_t, \mu_t) f(z_{t+1}\mid g_t, z_t, \mu_t) \nn\\
&\overset{(a)}{=}& \tilde{f}(g_{t+1}\mid g_t) f(z_{t+1}\mid z_t, \mu_t) \label{Eq:Kernel}
\end{eqnarray}
where $(a)$ follows from the channel isotropy. From the discussion in Section~\ref{Section:System}
\begin{equation}\label{Eq:PDF:Z}
f(z_{t+1}\mid z_t = c, \mu_t)= \left\{
\begin{aligned}
&\check{f}(z_{t+1} \mid z_t =1),  && \mu_t = 1,\\
&\check{f}(z_{t+1} \mid z_t =c), && \textrm{otherwise}
\end{aligned}\right.
\end{equation}
with $c\in\mathcal{Z}$. The optimal policy $\mathcal{P}^\star$ for solving  $\mathcal{A}(\mathcal{X}, \mathcal{U}, f_x, G)$ is analyzed in Section~\ref{Section:PefectFb}.  \footnote{\label{Foot:MultiObj}The problem of net throughput maximization can be also formulated as the multi-objective optimization problem of   maximizing  ergodic throughput $\E[\log_2(1+Pgz)]$ and minimizing  the feedback rate  $\alpha\Pr(\mu=1)$.  Note that the average feedback rate is proportional to the feedback probability. The multi-objective reward function can be modified from \eqref{Eq:Throughput} by replacing $\alpha$ with $\lambda \alpha$ with $\lambda \geq 0$ being the weight factor. Varying $\lambda$ varies the relative importance of throughput maximization and feedback rate reduction.  Solving the multi-objective optimization problem with varying $\lambda$ gives the maximum throughput as a function of the average feedback rate. However, proving the Pareto optimality \cite{SawaragiBoo:MultiObjOptim:1985} of this function  seems difficult.  }

In practice, CSI feedback is mplemented using a narrow-band (finite-rate) control channel \cite{3GPP-LTE, 80216, LovHeaETAL:WhatValuLimiFee:Oct:2004}. The finite-rate feedback constraint requires quantizing feedback CSI as described shortly.  Let $\hat{\bs}$ denote the $L\times 1$ complex unitary vector resulting from quantizing $\bs$ \cite{LovHeaETAL:GrasBeamMultMult:Oct:03,MukSabETAL:BeamFiniRateFeed:Oct:03}.  We consider a codebook based quantizer where the codebook $\mathcal{F}$ is a set of complex unitary vectors \cite{LovHeaETAL:GrasBeamMultMult:Oct:03,MukSabETAL:BeamFiniRateFeed:Oct:03}. Using $\mathcal{F}$, $\hat{\bs}$ is obtained by quantizing $\bs$ using the maximum SNR criterion, namely that $\hat{\bs} = \max_{\bx \in \mathcal{F}} |\bs^\dagger\bx|^2$. Define $\epsilon := |\hat{\bs}^\dagger\bs|^2$ where $0\leq \epsilon \leq 1$. This scalar quantifies the loss on the receive SNR $(\SNR_\epsilon = Pg\epsilon)$ upon CSI feedback compared with that $(\SNR = Pg)$ for the perfect feedback channel \cite{MukSabETAL:BeamFiniRateFeed:Oct:03,LovHeaETAL:GrasBeamMultMult:Oct:03, YeungLove:RandomVQBeamf:05}. Note that $\epsilon$ is equal to one minus the quantization error defined in \cite{Jindal:MIMOBroadcastFiniteRateFeedback:06,YeungLove:RandomVQBeamf:05}. The distribution of $\epsilon$ depends on the channel distribution  and design of the quantizer codebook   \cite{LovHeaETAL:GrasBeamMultMult:Oct:03,MukSabETAL:BeamFiniRateFeed:Oct:03, YeungLove:RandomVQBeamf:05}. \footnote{\label{Fn:Eps} For example, for the isotropic channel in Assumption~\ref{AS:Gauss} and a randomly generated codebook, the distribution function of $\epsilon$ is $\Pr(\epsilon\geq \delta ) = (1-\epsilon)^{L-1}$ \cite{YeungLove:RandomVQBeamf:05,Jindal:MIMOBroadcastFiniteRateFeedback:06}.}
Again, net throughput maximization is formulated as an average reward problem. This problem differs from   $\mathcal{A}(\mathcal{X}, \mathcal{U}, f_x, G)$ for perfect feedback only in the state transition kernel and award-per-stage. 
The transition PDF of $z$ is obtained as 
\begin{equation}\label{Eq:PDF:Z:Quant}
f_\epsilon(z_{t+1} \mid z_t = c, \mu_t)= \left\{
\begin{aligned}
&\underset{\epsilon}{\E}[\check{f}(z_{t+1} \mid z_t =\epsilon)],  && \mu_t = 1,\\
&\check{f}(z_{t+1} = a\mid z_t =c), && \textrm{otherwise}
\end{aligned}\right.
\end{equation}
where $\underset{\epsilon}{\E}$ denotes the expectation over the distribution of $\epsilon$. 
Thus for the corresponding average reward problem, the state transition kernel is $f_x^\epsilon = \tilde{f}\times f_\epsilon$.  Given the receive SNR $\SNR_\epsilon = Pg\epsilon$ upon CSI feedback, the  reward-per-stage function $G_\epsilon$ is modified from \eqref{Eq:RewardPerStage} as 
\begin{equation}\label{Eq:RewardPerStage:Quant}
G_\epsilon(x, \mu) := \left\{\begin{aligned}
&\underset{\epsilon}{\E}[\log_2(1+Pg\epsilon)]-\alpha, && \mu = 1,\\
&\log_2(1+Pgz), && \textrm{otherwise}.
\end{aligned}\right. 
\end{equation}
In Section~\ref{Section:LimFb}, we consider $\mathcal{A}(\mathcal{X}, \mathcal{U}, f_x^\epsilon, G_\epsilon)$ and analyze the resultant optimal policy. 

%-----------------------------------------
\section{Feedback Control Policy: Perfect Feedback Channel}\label{Section:PefectFb}

In this section, the optimal feedback control policy is analyzed for the perfect feedback channel. To compute the policy using DP, the state space of the feedback controller is quantized.   Given the discrete state space, the optimal control policy is proved to be of the threshold type. This policy structure  is shown to also hold for the optimal feedback control with the continuous state space. 
%------------------------------------------
\subsection{State-Space Quantization}\label{Section:QuantAlgo}

The channel state $(g, z)\in\mathcal{X}$ is quantized as $(\hat{g}, \hat{z})\in\hat{\mathcal{X}}$ that is  used as the input of the feedback controller, where $\hat{\mathcal{X}}$  denote the discrete state space defined in the sequel. Feedback control with $\hat{\mathcal{X}}$  allows applying  stochastic optimization theory  to analyzing the optimal control policy in the sequel. \footnote{No comprehensive theory exists for the average cost/reward problem with an infinite or continuous state space \cite{Bertsekas07:DynamicProg}.} The algorithms for quantizing $(g, z)$ are described as follows.  

The space of $g$, namely the nonnegative real line $\mathds{R}^+$, is partitioned into $M$ line segments $[\tilde{g}_0, \tilde{g}_1)$, $[\tilde{g}_1, \tilde{g}_2)$, $\cdots$, $[\tilde{g}_{M-1}, \infty)$, where $\tilde{g}_0=0$ and $0< \tilde{g}_1 < \tilde{g}_2<\cdots <\tilde{g}_{M-1}< \infty$. The values of $\{\tilde{g}_m\}$ are chosen such that $g$ lies in different line segments with equal probabilities. In other words, $\Pr(G\in [\tilde{g}_m, \tilde{g}_{m+1}))=\frac{1}{M}\ \forall \ 0\leq m \leq M-1$ with $\tilde{g}_{M} = \infty$. The above $M$ line segments are represented by a set of  $M$ finite values $\hat{\mathcal{G}} = \{\bar{g}_0,\bar{g}_1, \cdots, \bar{g}_{M-1}\}$ called \emph{grid points} \cite{Bertseka:ConvergeDiscretDynamicProg:75}, which are \emph{arbitrarily} selected from corresponding segments and hence satisfy the constraints $\bar{g}_m\in[\tilde{g}_m, \tilde{g}_{m+1}]\ \forall \ m$.~\footnote{The grid points in the spaces of $g$ and $z$ can be adjusted   to  yield a better approximation of the optimal policy for the continuous state space. However, such an adjustment has  no effect on the analysis in the sequel. }  Similarly, the space of $z$, namely the line segment $\mathcal{Z}=[0, 1]$, is divided into $N$ sub-segments of equal length  $[\tilde{z}_0, \tilde{z}_1), [\tilde{z}_1, \tilde{z}_2), \cdots, [\tilde{z}_{N-1}, \tilde{z}_{N}]$ where $\tilde{z}_0=0$ and $ \tilde{z}_N=1$. \footnote{The line sub-segments are chosen to have equal length rather than equal probability since the distribution of $z$ depends on the optimal feedback control policy and is unknown at this stage.} Define the set $\tilde{\mathcal{Z}} := \{\tilde{z}_n\}_{n=0}^{N}$. Again,  $N$ grid points $\hat{\mathcal{Z}} = \{\bar{z}_0, \bar{z}_1, \cdots, \bar{z}_{N-1}\}$ are arbitrarily  chosen from the $N$ sub-segments mentioned earlier. The discrete state space can be readily written as $\hat{\mathcal{X}} :=\hat{\mathcal{G}}\times\hat{\mathcal{Z}}$. The space $\mathcal{X}$  can be mapped to $\hat{\mathcal{X}}$  using the following quantization functions $\mathcal{Q}_g$ and $\mathcal{Q}_z$
\begin{eqnarray}
\hat{g} = \mathcal{Q}_g(g) &=& \bar{g}_m, \quad g\in[\tilde{g}_m, \tilde{g}_{m+1})\label{Eq:QuantFun:g}\\
\hat{z} = \mathcal{Q}_z(z) &=& \bar{z}_n, \quad z\in[\tilde{z}_n, \tilde{z}_{n+1}). \label{Eq:QuantFun:z}
\end{eqnarray}
Note that the above  quantization algorithms  are used for simplicity and only one of many designs that lead to the same  results as obtained in the following sections. \footnote{Specifically, other quantization algorithms also lead to Theorem~\ref{Theo:Policy:Quant} and Proposition~\ref{Prop:Policy:Cont} if $d_s$ defined in \eqref{Eq:MaxErr} and $\Pr(g\geq \tilde{g}_{M-1})$ converge to zero with $M,N\rightarrow\infty$. See the proofs of  Theorem~\ref{Theo:Policy:Quant} and Proposition~\ref{Prop:Policy:Cont} for details.}

Given Assumption~\ref{AS:Markov}, the sequences $\{\hat{g}_t\}$ and $\{\hat{z}_t\}$ are two  Markov chains with the discrete state spaces $\hat{\mathcal{G}}$ and $\hat{\mathcal{Z}}$, respectively. The transition probabilities of the two Markov chains are decoupled as a result of \eqref{Eq:Kernel}. For $\{\hat{z}_t\}$,   let $P_{m,n}$ denote the probability for transition from the state $m$ to $n$ and $\mu$ the feedback decision corresponding to quantized controller input. Then $P_{m,n}$ can be written as a function of $\mu$
\begin{equation}\label{Eq:TxProb:z}
P_{m,n}(\mu) := \left\{\begin{aligned}
&\int_{\tilde{z}_n}^{\tilde{z}_{n+1}}\check{f}(z_{t+1}=\tau\mid  z_t= 1)d\tau, && \mu = 1\\
&\int_{\tilde{z}_n}^{\tilde{z}_{n+1}}\check{f}(z_{t+1}=\tau\mid  z_t= \bar{z}_m)d\tau, && \textrm{otherwise}  
\end{aligned}\right.
\end{equation}
where $0\leq m, n\leq N-1$. Note that $P_{m,n}(1)$ is independent of $n$. 
Similarly, define the counterpart of  $P_{m,n}$ for $\{\hat{g}_t\}$ as
\begin{equation}
\tilde{P}_{m,n} := \int_{\tilde{g}_n}^{\tilde{g}_{n+1}}\tilde{f}(g_{t+1} = \tau\mid g_t = \bar{g}_m)d\tau\label{Eq:TxProb:g}
\end{equation}
where $0\leq m, n\leq M-1$. Note that $\tilde{P}_{m,n}$ are unaffected by feedback control.  For convenience, define the transition probability matrix $\bP$ with $[\bP]_{m,n} := P_{m,n}$ and similarly $\tilde{\bP}$ with $[\tilde{\bP}]_{m,n} := \tilde{P}_{m,n}$. Due to feedback control, the stationary probabilities of $\hat{z}$ depend on those of $\hat{g}$. Thus we define the joint stationary probability $\pi_{m,n} := \Pr(\hat{g} = \bar{g}_m, \hat{z}=\bar{z}_n)$ where $0\leq m\leq M-1, 0\leq n\leq N-1$. With the discrete state space, the state transition kernel is denoted  as $\bP_x := \tilde{\bP}\times\bP$.

The average reward problems $\mathcal{A}(\hat{\mathcal{X}}, \mathcal{U}, \bP_x, G)$ and $\mathcal{A}(\mathcal{X}, \mathcal{U}, f_x, G)$ are considered in Section~\ref{Section:PefectFb:QSpace} and \ref{Section:PefectFb:CSpace}, respectively. Let $\hat{J}^\star$ denote the maximum average reward for the discrete state space.  Then $\hat{J}^\star$ is an approximation of $J^\star$  for the continuous state space. They converge as the quantization resolution increases: $M\rightarrow\infty, N\rightarrow\infty$  (cf. Section~\ref{Section:PefectFb:CSpace}). 

\subsection{Policy for Discrete State Space}\label{Section:PefectFb:QSpace}
This section focuses on $\mathcal{A}(\hat{\mathcal{X}}, \mathcal{U}, \bP_x, G)$. The resultant optimal policy $\hat{\mathcal{P}}^\star$  is shown to be of the threshold type. In addition, the computation of  $\hat{\mathcal{P}}^\star$ is discussed. 

Rather than obtaining $\hat{\mathcal{P}}^\star$ directly, the policy $\hat{\mathcal{P}}^\star_\beta$ is derived by solving $\mathcal{D}(\hat{\mathcal{X}}, \mathcal{U}, \bP_x, G)$. Then the desired $\hat{\mathcal{P}}^\star$ follows from $\hat{\mathcal{P}}^\star_\beta$ by allowing $\beta\rightarrow 1$. The policy $\hat{\mathcal{P}}^\star_\beta$ can be found using DP \cite{Bertsekas07:DynamicProg}. To this end, define the DP operator $\mathsf{F}$ on a given function $q:\hat{\mathcal{X}}\rightarrow\mathds{R}^+$ as 
\begin{equation}\label{Eq:DPOperator}\begin{aligned}
(\mathsf{F}q)(\bar{g}_m, \bar{z}_n)=\max_{\mu\in\{0,1\}}\l[G(\bar{g}_m, \bar{z}_n, \mu)+
\beta\sum\nolimits_{k,\ell} q(k, \ell)\tilde{P}_{k, m} P_{\ell, n}(\mu)\right]\end{aligned}
\end{equation}
where $0\leq m \leq M-1$ and $0\leq n \leq N-1$. The maximum discounted reward $\hat{J}_\beta^\star$ satisfies Bellman's equation $\hat{J}_\beta^\star = \mathsf{F}\hat{J}_\beta^\star$ \cite{Bertsekas07:DynamicProg}. For convenience, represent $\hat{J}_\beta((\bar{g}_m, \bar{z}_n))$ by $\hat{J}_\beta(m, n)$.  

We refer to a $N\times N$ \emph{stochastic matrix} $\bA$ as being \emph{montone} if $\bA$ satisfies~\footnote{In this paper, a stochastic matrix refers to the right stochastic matrix that comprises nonnegative elements and the sum of each column is equal to one \cite{GallagerBook:StochasticProcs:95}. }
\begin{equation}
\sum_{m=m_0}^{N-1}[\bA]_{m,n_1} \geq \sum_{m=m_0}^{N-1}[\bA]_{m,n_2}\quad \textrm{if}\ 0\leq n_2\leq n_2\leq N-1
\end{equation}
where $0\leq m_0\leq N-1$. Thus $\bP$ is monotone following Assumption~\ref{AS:SD} and \eqref{Eq:TxProb:z}. Moreover, define a \emph{monotone vector} of real numbers as one whose elements are in the \emph{ascending} order. The following lemma is useful for the analysis in this paper. 
\begin{lemma}\label{Lem:MonoProd}\
\begin{enumerate}
\item Consider a real vector $\bv$ and a stochastic matrix $\bA$ that are both monotone and have the same height. Then $\bA^\dagger \bv$ is a monotone vector;
\item Consider a matrix $\bB$ of nonnegative elements and a matrix  $\bC$ with monotone rows. Then the rows of $\bB\bC$ are also monotone. 
\end{enumerate}
\end{lemma}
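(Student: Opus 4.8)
The plan is to prove the two parts directly from the definitions, using only elementary manipulations of sums. For part 1, let $\bv=(v_0,\dots,v_{N-1})^\dagger$ be monotone (so $v_0\le v_1\le\cdots\le v_{N-1}$) and let $\bA$ be an $N\times N$ monotone stochastic matrix. I want to show that the entries of $\bw:=\bA^\dagger\bv$, namely $w_n=\sum_{m=0}^{N-1}[\bA]_{m,n}v_m$, satisfy $w_{n_1}\le w_{n_2}$ whenever $n_1\le n_2$. The natural device is Abel summation: writing $v_m = v_0 + \sum_{k=1}^{m}(v_k-v_{k-1})$ and using that each column of $\bA$ sums to one, I get
\begin{equation}
w_n = v_0 + \sum_{k=1}^{N-1}(v_k-v_{k-1})\sum_{m=k}^{N-1}[\bA]_{m,n}.\nn
\end{equation}
Each difference $v_k-v_{k-1}$ is nonnegative by monotonicity of $\bv$, and each tail sum $\sum_{m=k}^{N-1}[\bA]_{m,n}$ is nondecreasing in $n$ by the monotonicity (stochastic dominance) property of $\bA$; hence $w_n$ is nondecreasing in $n$, i.e. $\bw$ is monotone. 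This is the crux of part 1 and is essentially a one-line computation once the Abel-summation rewrite is in place.

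For part 2, let $\bB$ have nonnegative entries and let $\bC$ have monotone rows, meaning that for each fixed row index $i$ the entries $[\bC]_{i,0}\le[\bC]_{i,1}\le\cdots$ are in ascending order. Fix a row index $i$ of $\bB\bC$ and two column indices $n_1\le n_2$. Then
\begin{equation}
[\bB\bC]_{i,n_2}-[\bB\bC]_{i,n_1} = \sum_{k}[\bB]_{i,k}\bigl([\bC]_{k,n_2}-[\bC]_{k,n_1}\bigr)\ge 0,\nn
\end{equation}
since each factor $[\bB]_{i,k}$ is nonnegative and each difference $[\bC]_{k,n_2}-[\bC]_{k,n_1}$ is nonnegative by the monotonicity of row $k$ of $\bC$. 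Hence every row of $\bB\bC$ is monotone. (If $\bB$ and $\bC$ are not square but conformable the same argument applies verbatim; I would state it for general conformable sizes.)

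I do not anticipate a genuine obstacle here — both parts are short. The only point requiring a little care is bookkeeping with the transpose and the convention that ``stochastic matrix'' here means columns (not rows) sum to one, so that in part 1 it is the column sums of $\bA$ that are used and the tail-sum monotonicity is exactly the hypothesis on $\bA$; I would make the index ranges explicit to avoid any off-by-one confusion between the $N$ grid points and the $N{+}1$ endpoints $\tilde z_n$.
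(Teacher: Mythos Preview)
Your proposal is correct and follows essentially the same route as the paper's proof: both parts use the same elementary manipulations, with part~1 handled by Abel summation (the paper writes $v_\ell=\sum_{r\le\ell}\Delta v_r$ and interchanges sums, which is exactly your rewrite) and part~2 by the direct nonnegative-linear-combination argument you give. Your version is arguably a touch cleaner in part~1, since by separating out the constant $v_0$ term via the column-stochastic property you avoid the paper's implicit reliance on the $r=0$ tail-sum difference vanishing.
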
 
\begin{proof}See Appendix~\ref{App:MonoProd}. 
\end{proof}

The following lemma is essential for obtaining  the main result of this section. The proof of Lemma~\ref{Lemma:Monotone} is based on \emph{value iteration} \cite{Bertsekas07:DynamicProg}. Using this method, for an arbitrary function $q:\hat{\mathcal{X}}\rightarrow\mathds{R}^+$, the maximum discounted reward  $\hat{J}_\beta^\star$ can be computed iteratively as 
\begin{equation}\label{Eq:ValIt}
\hat{J}_\beta^\star(m,n) = \lim_{k\rightarrow\infty}(\mathsf{F}^k q)(m,n).
\end{equation} 
\begin{lemma}\label{Lemma:Monotone} $\hat{J}_\beta^\star$ has the following properties:
\begin{enumerate}
\item Given $\hat{g}$, $\hat{J}^\star_\beta(\hat{g},\hat{z})$ monotonically increases with $\hat{z}$;
\item Define $w(m,n, \mu) := \sum_{k,\ell} \hat{J}^\star_\beta(k,\ell)\tilde{P}_{m,k}P_{n,\ell}(\mu)$. Given $m$ and $\mu$, $w(m,n, \mu)$ monotonically increases with $n$;
\item Given $m$, $w(m,n, 1) \geq w(m,n, 0)\ \forall \ n$.
\end{enumerate}
\end{lemma}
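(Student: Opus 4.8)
The plan is to establish all three properties simultaneously by \emph{value iteration}, using Lemma~\ref{Lem:MonoProd} to carry monotonicity through the dynamic-programming operator. Since the grid set $\hat{\mathcal{G}}$ is finite, the reward-per-stage $G$ of \eqref{Eq:RewardPerStage} is bounded on $\hat{\mathcal{X}}\times\mathcal{U}$, so $\mathsf{F}$ in \eqref{Eq:DPOperator} is a $\beta$-contraction in the sup norm and \eqref{Eq:ValIt} holds for the initialization $q_0\equiv 0$; set $q_{k+1}=\mathsf{F}q_k$, so $q_k\to\hat J^\star_\beta$ uniformly. I would prove by induction on $k$ the statement $(\star_k)$: for every $m$ the vector $\bigl(q_k(m,0),\dots,q_k(m,N-1)\bigr)$ is monotone (non-decreasing in $n$). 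The first property then follows by letting $k\to\infty$, because monotonicity in $n$ is preserved under pointwise limits; the base case $(\star_0)$ is immediate.

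For the inductive step I would split the maximum in \eqref{Eq:DPOperator} into its two branches. The $\mu=1$ branch equals $\log_2(1+P\bar g_m)-\alpha+\beta\sum_{j,\ell}q_k(j,\ell)\tilde P_{m,j}P_{n,\ell}(1)$ and is \emph{constant} in $n$: choosing $\mu=1$ resets $z$ to $1$ irrespective of the current $z$-state, so $\bigl(P_{n,\ell}(1)\bigr)_\ell$ does not depend on $n$ (cf. \eqref{Eq:TxProb:z}), and $G(\bar g_m,\bar z_n,1)$ does not involve $\bar z_n$. The $\mu=0$ branch equals $\log_2(1+P\bar g_m\bar z_n)+\beta\sum_j\tilde P_{m,j}\bigl(\sum_\ell q_k(j,\ell)P_{n,\ell}(0)\bigr)$; its first term is non-decreasing in $n$ because $\bar z_n$ increases with $n$ and $z\mapsto\log_2(1+P\bar g_m z)$ is increasing, while for the second term $\bP(0)=[P_{n,\ell}(0)]$ is a monotone stochastic matrix (Assumption~\ref{AS:SD} read through \eqref{Eq:TxProb:z}), so Lemma~\ref{Lem:MonoProd}(1) together with $(\star_k)$ makes each inner sum $\sum_\ell q_k(j,\ell)P_{n,\ell}(0)$ non-decreasing in $n$, and Lemma~\ref{Lem:MonoProd}(2) — applied to $\tilde{\bP}$, whose entries are non-negative, times the matrix of those inner sums, which has monotone rows — keeps the $j$-average non-decreasing in $n$. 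Both branches being non-decreasing in $n$, and the pointwise maximum of non-decreasing functions being non-decreasing, $(\star_{k+1})$ holds. This is precisely why Lemma~\ref{Lem:MonoProd} has two parts: one propagates monotonicity through the $z$-transition, the other through the marginalization over the $g$-transition.

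The second and third properties then drop out of the first without further iteration. For the second, $w(m,n,1)$ is constant in $n$ for the same reason as the $\mu=1$ branch above, and $w(m,n,0)=\sum_j\tilde P_{m,j}\bigl(\sum_\ell\hat J^\star_\beta(j,\ell)P_{n,\ell}(0)\bigr)$ is non-decreasing in $n$ by the identical Lemma~\ref{Lem:MonoProd} argument, applied now to $\hat J^\star_\beta$ itself using the first property. For the third, I would use Assumption~\ref{AS:SD} in its comparison form: since $\bar z_n\le 1$, integrating the stochastic-dominance inequality over the sub-segments of $\mathcal{Z}$ shows that the discretized next-$z$ law $\bigl(P_{n,\ell}(1)\bigr)_\ell$, corresponding to $z_t=1$, stochastically dominates $\bigl(P_{n,\ell}(0)\bigr)_\ell$, corresponding to $z_t=\bar z_n$; since $\ell\mapsto\hat J^\star_\beta(j,\ell)$ is non-decreasing by the first property, the standard fact that expectations of non-decreasing functions are ordered under stochastic dominance gives $\sum_\ell\hat J^\star_\beta(j,\ell)P_{n,\ell}(1)\ge\sum_\ell\hat J^\star_\beta(j,\ell)P_{n,\ell}(0)$ for each $j$, and multiplying by $\tilde P_{m,j}\ge 0$ and summing over $j$ yields $w(m,n,1)\ge w(m,n,0)$.

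The step I expect to be the main obstacle is the inductive step of the first property, specifically the bookkeeping around the state-space discretization: one must check that Assumption~\ref{AS:SD} really does make $\bP(0)$ a monotone stochastic matrix once the transition density is integrated over the grid sub-segments, and then keep monotonicity in $n$ alive while pushing it simultaneously through the $z$-transition and the averaging against the $g$-transition kernel $\tilde{\bP}$ — which is exactly the situation the two parts of Lemma~\ref{Lem:MonoProd} are designed for. A minor technical point is that only \emph{weak} monotonicity survives the limit in \eqref{Eq:ValIt}; this is all the lemma claims and all that the later threshold results need, and the uniform convergence guaranteed by the contraction property makes that limit harmless.
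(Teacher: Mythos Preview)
Your proposal is correct and follows essentially the same route as the paper: value iteration preserves monotonicity in $\hat z$, with Lemma~\ref{Lem:MonoProd} carrying that monotonicity through the $z$-transition $\bP(0)$ and the $g$-average $\tilde\bP$. The only cosmetic difference is that the paper, instead of showing each branch is non-decreasing and then taking a max, fixes the optimizer $\mu'$ at the smaller index $n_2$ and uses it as a (suboptimal) competitor at $n_1$; your ``max of non-decreasing is non-decreasing'' argument is equivalent. Your treatment of property~3 via stochastic dominance of $(P_{n,\ell}(1))_\ell$ over $(P_{n,\ell}(0))_\ell$ is in fact more explicit than the paper's, which merely says the remaining claims follow by the same steps and omits details.
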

\begin{proof}
See Appendix~\ref{App:Monotone}.
\end{proof}

Using the above lemma, the main result of this section is obtained as shown in the following theorem. 
\begin{theorem} \label{Theo:Policy:Quant}The optimal policy $\hat{\mathcal{P}}^\star$ is of the threshold type. Specifically, there exists a function $\hat{y}: \hat{\mathcal{G}}\rightarrow \mathcal{Z}$ such that 
\begin{equation}
\hat{\mathcal{P}}^\star: \mu = \left\{\begin{aligned}
&0, && \hat{z} \geq \hat{y}(\hat{g})\\
&1,&& \textrm{otherwise.}
\end{aligned}\right.\label{Eq:Policy:Quant}
\end{equation}
The function $\hat{y}(\cdot)$ is bounded as 
 \begin{equation}\label{Eq:Threshold:Bounds:Dist}
\left(\frac{2^{-\alpha}(1+P\hat{g})-1}{P\hat{g}}\r)^+\leq \hat{y}(\hat{g}) \leq 1.
\end{equation} 
\end{theorem}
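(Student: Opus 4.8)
The plan is to derive the threshold structure directly from Lemma~\ref{Lemma:Monotone} and then establish the bounds \eqref{Eq:Threshold:Bounds:Dist} by comparing the per-stage rewards of the two actions. First I would invoke Bellman's equation: the optimal action at state $(\bar g_m, \bar z_n)$ is $\mu=1$ precisely when
\begin{equation}
G(\bar g_m, \bar z_n, 1) + \beta\, w(m,n,1) \geq G(\bar g_m, \bar z_n, 0) + \beta\, w(m,n,0),\nn
\end{equation}
i.e. when $\Delta(m,n) := \big[G(\bar g_m,\bar z_n,1) - G(\bar g_m,\bar z_n,0)\big] + \beta\big[w(m,n,1)-w(m,n,0)\big] \geq 0$. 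From \eqref{Eq:RewardPerStage}, $G(\bar g_m,\bar z_n,1)-G(\bar g_m,\bar z_n,0) = \log_2(1+P\bar g_m) - \alpha - \log_2(1+P\bar g_m\bar z_n)$, which is \emph{monotonically decreasing} in $\bar z_n$ for fixed $m$ (since $\log_2(1+P\bar g_m\bar z_n)$ increases in $\bar z_n$). By Lemma~\ref{Lemma:Monotone}(2), $w(m,n,1)$ is independent of $n$ (because $P_{n,\ell}(1)$ does not depend on $n$, as noted after \eqref{Eq:TxProb:z}), while $w(m,n,0)=w(m,n,0)$ increases in $n$; hence $w(m,n,1)-w(m,n,0)$ is nonincreasing in $n$. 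Therefore $\Delta(m,n)$ is nonincreasing in $n$ for each fixed $m$, so the set $\{n : \Delta(m,n)\geq 0\}$ is a down-set $\{0,1,\dots,n^\star(m)-1\}$ (possibly empty). Defining $\hat y(\bar g_m) := \tilde z_{n^\star(m)}$ (with the convention $\hat y(\bar g_m)=0$ if the set is empty), the policy takes the form \eqref{Eq:Policy:Quant} after passing $\beta\to 1$, which preserves the monotonicity since each $\hat{\mathcal{P}}^\star_\beta$ has this structure and the threshold is one of finitely many grid values.

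For the upper bound $\hat y(\hat g)\leq 1$ there is nothing to prove since $\hat z\in[0,1)$ always. For the lower bound, the idea is that feedback is \emph{certainly} worthwhile whenever even its immediate one-stage reward gain is nonnegative, i.e. whenever $\log_2(1+P\hat g) - \alpha \geq \log_2(1+P\hat g\hat z)$; I would argue that in this regime the continuation term also favors $\mu=1$. Concretely, by Lemma~\ref{Lemma:Monotone}(3), $w(m,n,1)-w(m,n,0)\geq 0$ for all $n$, so $\Delta(m,n) \geq G(\bar g_m,\bar z_n,1)-G(\bar g_m,\bar z_n,0)$; hence $\Delta(m,n)\geq 0$ whenever $\log_2(1+P\bar g_m) - \alpha - \log_2(1+P\bar g_m\bar z_n)\geq 0$. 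Solving this inequality for $\bar z_n$ gives $\bar z_n \leq \big(2^{-\alpha}(1+P\bar g_m)-1\big)/(P\bar g_m)$, and since the right side could be negative we take the positive part. Thus $\mu=1$ is optimal on this range of $\hat z$, which forces the threshold $\hat y(\hat g)$ to be at least $\big(2^{-\alpha}(1+P\hat g)-1\big)^+/(P\hat g)$, establishing \eqref{Eq:Threshold:Bounds:Dist}.

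The main obstacle I anticipate is the interchange of the threshold structure with the limit $\beta\to 1$ that converts $\hat{\mathcal{P}}^\star_\beta$ into $\hat{\mathcal{P}}^\star$. One must verify that the standard vanishing-discount argument (e.g. \cite{Bertsekas07:DynamicProg}) applies here — that the relative value functions $\hat J_\beta^\star(\cdot) - \hat J_\beta^\star(\text{reference state})$ converge along a subsequence to a solution of the average-reward optimality equation, and that the limiting policy inherits the threshold form. Because the state space $\hat{\mathcal X}$ is finite and (under the irreducibility implied by Assumptions~\ref{AS:Markov}--\ref{AS:SD}) the chain is communicating, this limit is well behaved, and the monotonicity of $\Delta(m,n)$ in $n$ — being a closed condition — survives the limit; the threshold simply converges to one of the finitely many grid points $\tilde z_n$. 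A secondary subtlety is that the per-stage reward $G$ is not bounded above on $\hat{\mathcal X}$ only if $\hat g$ were unbounded, but $\hat{\mathcal G}$ is finite so $G$ is bounded, and all of Lemma~\ref{Lemma:Monotone}'s value-iteration arguments go through. I would therefore present the argument for fixed $\beta$ in full detail and then cite the vanishing-discount limit, noting the finiteness of $\hat{\mathcal X}$ as the reason it applies cleanly.
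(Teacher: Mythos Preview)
Your proposal is correct and follows essentially the same route as the paper's proof: both arguments work at the discounted level, define the action-difference $\Delta(m,n)$ from Bellman's equation, use Lemma~\ref{Lemma:Monotone} (parts (2) and the constancy of $w(m,n,1)$ in $n$) to show $\Delta(m,n)$ is nonincreasing in $n$ and hence obtain the threshold form, invoke Lemma~\ref{Lemma:Monotone}(3) to bound $\Delta(m,n)\geq G(\cdot,1)-G(\cdot,0)$ for the lower bound on $\hat y$, and then pass to the average-reward problem via $\beta\to 1$. Your discussion of the vanishing-discount step is more explicit than the paper's, which simply cites $\hat J^\star=\lim_{\beta\to 1}(1-\beta)\hat J_\beta^\star$ and the finiteness of $\hat{\mathcal X}$.
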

\begin{proof}
See Appendix~\ref{App:Policy:Quant}. 
\end{proof} 
Several  remarks are in order. 
\begin{enumerate}
\item Why the optimal policy has the threshold structure is explained as follows. Small $\hat{z}$ corresponds to outdated transmit CSI and vice versa. As a result, CSI feedback for small $\hat{z}$ yields significant reward-per-stage but that for  large $\hat{z}$ may result in negative reward-per-stage due to the feedback cost. Therefore the optimal feedback policy should enable feedback only in the regime of small $\hat{z}$, resulting in the threshold-type policy. In addition, the feedback threshold on $\hat{z}$ depends on $\hat{g}$ since the reward-per-stage is a function of $\hat{g}$.  

\item One would expect that larger $\hat{g}$ makes feedback more desirable because the resultant reward is also larger (cf. \eqref{Eq:RewardPerStage}). In other words, the threshold function $\hat{y}$ should monotonically increase with $\hat{g}$. This is observed in simulation. However, proving this property requires making an assumption on the temporal correlation of $g$, which, however, is unnecessary for this work. 

\item The threshold lower bound in  \eqref{Eq:Threshold:Bounds:Dist} corresponds to the feedback control policy that enables feedback whenever it gives larger reward-per-stage than  no feedback. However,  this policy is suboptimal because feedback may lead to extra reward in subsequent slots despite providing a smaller reward-per-stage in the current slot than no feedback. Therefore the optimal policy should support more frequent feedback than the above suboptimal one. This is the reason that the optimal threshold function is lower bounded  as shown in \eqref{Eq:Threshold:Bounds:Dist}.

\item For a high transmit SNR ($P\rightarrow\infty$), the award-per-stage in \eqref{Eq:RewardPerStage} can be approximated as 
\begin{eqnarray}
G(\hat{g}, \hat{z}, \mu) &=& \left\{
\begin{aligned}
&\log_2(1+P\hat{g}) - \alpha,&& \mu = 1,\\
&\log_2(1/\hat{z}+P\hat{g}) + \log_2\hat{z},&& \textrm{otherwise}
\end{aligned}\right.\\
 &\approx& \left\{
\begin{aligned}
&\log_2(P\hat{g}) - \alpha,&& \mu = 1,\\
&\log_2(P\hat{g}) + \log_2\hat{z},&& \textrm{otherwise}.
\end{aligned}\right. \label{Eq:G:HiSNR}
\end{eqnarray}
Define $\Delta G(\hat{g}, \hat{z}) := G(\hat{g}, \hat{z}, 0)-G(\hat{g}, \hat{z}, 1)$. From \eqref{Eq:G:HiSNR}, $\Delta G(\hat{g}, \hat{z})\approx  \log_2\hat{z} + \alpha$ for $P\rightarrow \infty$. The optimal policy $\hat{\mathcal{P}}^\star$ essentially depends only on $\Delta G(\hat{g}, \hat{z}) $  and the dynamics of $\hat{z}$. Both factors are independent of $\hat{g}$  for $P\rightarrow \infty$. Consequently, the optimal threshold on $\hat{z}$ is insensitive to the variation on $\hat{g}$ for high SNR's. This is confirmed by simulation as discussed in Section~\ref{Section:Sim}.  

\end{enumerate}

For the extreme cases of zero and infinite feedback prices, the feedback threshold is specified in the following corollary. 
\begin{corollary}\label{Cor:Policy:ExtremeFb}
The feedback threshold $\hat{y}$ is fixed at $\hat{y} = 0$ for $\alpha = \infty$ and $\hat{y} = 1$ for $\alpha = 0$.
\end{corollary}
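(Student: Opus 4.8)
The plan is to dispatch the two extreme feedback prices separately. For $\alpha = 0$ the statement is read off the bounds \eqref{Eq:Threshold:Bounds:Dist} already established in Theorem~\ref{Theo:Policy:Quant}; for $\alpha = \infty$ those bounds are vacuous — the lower bound degenerates to $\bigl(-1/(P\hat g)\bigr)^+ = 0$ — so a short direct argument is needed, and that is the only real obstacle.

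First I would settle $\alpha = 0$. Substituting $\alpha = 0$ in \eqref{Eq:Threshold:Bounds:Dist} makes the lower bound equal $\bigl((1+P\hat g) - 1\bigr)/(P\hat g) = 1$ for every $\hat g > 0$, while the upper bound is $\hat y(\hat g) \le 1$; hence $\hat y(\hat g) = 1$. (A grid point $\hat g = 0$, if it occurs, is degenerate: by \eqref{Eq:RewardPerStage} the reward-per-stage there is independent of $\hat z$ and $\mu$, so $\mu = 1$ is weakly optimal by part~3 of Lemma~\ref{Lemma:Monotone} and the threshold may be taken as $1$ there as well.) Uniqueness is also immediate: for $\hat g>0$ and $\hat z<1$ one has $G(\hat g,\hat z,1)>G(\hat g,\hat z,0)$ strictly, and part~3 of Lemma~\ref{Lemma:Monotone} gives the companion inequality on the continuation values, so $\mu=1$ is the unique optimal action at every such state. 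The interpretation is the obvious one — when feedback is free, it should be performed in every slot in which the beamformer $\bff$ is not already aligned with $\bs$.

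Next I would settle $\alpha = \infty$ directly from the DP operator \eqref{Eq:DPOperator}. Setting $\alpha = \infty$ makes the reward-per-stage under feedback $G(\hat g, \hat z, 1) = \log_2(1 + P\hat g) - \alpha = -\infty$ at every state, which is strictly below $G(\hat g, \hat z, 0) = \log_2(1 + P\hat g \hat z) \in \mathds{R}^+$. Hence the inner maximization in \eqref{Eq:DPOperator} — and therefore Bellman's equation $\hat J_\beta^\star = \mathsf{F}\hat J_\beta^\star$ together with the $\beta \to 1$ passage that yields $\hat{\mathcal P}^\star$ — always returns $\mu = 0$; the optimal policy never feeds back, which by \eqref{Eq:Policy:Quant} is exactly the threshold policy with $\hat y(\hat g) \equiv 0$. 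If one would rather view $\alpha = \infty$ as the limit of large finite prices, the same conclusion follows by observing that the never-feed-back policy has $\alpha$-independent average reward in $[0, \infty)$, whereas each of the finitely many distinct threshold policies that feeds back on some recurrent state has feedback probability at least some $p_{\min} > 0$ and hence average reward at most $\log_2(1 + P\bar g_{M-1}) - \alpha p_{\min} \to -\infty$; so once $\alpha$ is large enough no feedback is strictly optimal, i.e.\ $\hat y \equiv 0$.
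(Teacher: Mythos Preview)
Your proof is correct and, for $\alpha=\infty$, essentially identical to the paper's: both argue that a single feedback instant drives the reward to $-\infty$, so the optimal controller must block feedback, i.e.\ $\hat y\equiv 0$. Your supplementary limiting argument (finite $\alpha\to\infty$) is a nice addition not present in the paper.

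For $\alpha=0$ you take a slightly different and arguably cleaner route. The paper argues directly from the reward-per-stage comparison $G(\hat g,\hat z,0)\le G(\hat g,\hat z,1)$ and concludes (somewhat tersely, since it omits the continuation-value comparison) that feedback should be performed every slot. You instead read the conclusion straight off the threshold bounds \eqref{Eq:Threshold:Bounds:Dist} of Theorem~\ref{Theo:Policy:Quant}: at $\alpha=0$ the lower bound collapses to $1$, forcing $\hat y(\hat g)=1$. This is more economical because it reuses work already done, and your explicit invocation of Lemma~\ref{Lemma:Monotone}~part~3 for the continuation values fills the small gap left implicit in the paper's argument. Both approaches yield the same result with comparable effort.
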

\begin{proof} See Appendix~\ref{App:Policy:ExtremeFb}
\end{proof}
Note that $\hat{y}=0$ and $\hat{y}=1$ correspond to no feedback and feedback in every time slot, respectively. The above results agree with the intuition that feedback is undesirable when the feedback price is too high but feedback should be performed persistently if it is free.

Given the threshold function $\hat{y}$ defining $\hat{\mathcal{P}}^\star$ (cf. Theorem~\ref{Theo:Policy:Quant}),  the maximum award can be obtained as
\begin{equation}\label{Eq:AvReward:b}
\hat{J}^\star(\hat{y}) = \sum_{m=0}^{M-1}\sum_{n=0}^{N-1}G(\bar{g}_m, \bar{z}_n, \check{\mu}_{m,n})\pi_{m,n} 
\end{equation}
where 
\begin{equation}
\check{\mu}_{m,n} = \l\{\begin{aligned}
&0,&& \bar{z}_n\geq \hat{y}(\bar{g}_m),\\
&1,&& \textrm{otherwise}
\end{aligned}
\r.\label{Eq:Decision:mn}
\end{equation}
and the  stationary probabilities $\{\pi_{m,n}\}$ are obtained by solving the following linear equations \cite{GallagerBook:StochasticProcs:95}
\begin{equation}\label{Eq:Eqs}
\pi_{m,n} = \sum_{k,\ell}\tilde{P}_{m,k}P_{n,\ell}(\check{\mu}_{m,n})\pi_{k,\ell}\quad \textrm{and} \quad \sum_{m,n}\pi_{m,n} = 1.
\end{equation}

Finally, we discuss the computation of $\hat{\mathcal{P}}^\star$. Let  $\hat{\by}$ denote the $N\times 1$ threshold vector with $[\hat{\by}]_n = \hat{y}(\bar{g}_n)$.  Then  $\by$  determining $\hat{\mathcal{P}}^\star$ can be computed either by an exhaustive search or policy iteration \cite{Bertsekas07:DynamicProg}. Using Theorem\ref{Theo:Policy:Quant} and \eqref{Eq:AvReward:b}, the brute force approach is specified by 
$\hat{\by} = \max_{\bx\in\mathcal{V}}J^\star(\bx)$
where $\mathcal{V} = \left\{\bx \in \tilde{\mathcal{Z}}^M\mid [\bx]_m \geq \l(\frac{2^{-\alpha}(1+P\bar{g}_m)-1}{P\bar{g}_m}\r)^+\right\}$. 
For this approach, the threshold structure of $\hat{\mathcal{P}}^\star$  is exploited  to reduce the complexity of the exhaustive search from $O(2^{MN})$ to $O(|\mathcal{V}|) = O(N^M)$.  However, this complexity is still too high if $M$ and $N$ are large. For this case, a more practical approach for computing $\by$ is policy iteration  \cite{Bertsekas07:DynamicProg}. Each iteration comprises two steps, namely \emph{policy evaluation} and \emph{policy improvement}. The policy evaluation in the $i$th iteration is to  evaluate a given policy $\hat{\mathcal{P}}^{(i)}$ by computing the average reward and a set of parameters $\{A_{m,n}\}$ called \emph{differential rewards} as follows \cite{Bertsekas07:DynamicProg}
\begin{eqnarray}
J^{(i+1)} + A^{(i+1)}_{m,n} &=& G(\bar{g}_m, \bar{z}_n, \mu_{m,n}) + \sum_{k,\ell}A^{(i)}_{k,\ell}\tilde{P}_{k,m}P_{\ell,n}(\mu_{m,n}),\quad \forall \ m,n\\
A^{(k)}_{M-1, N-1} &=& 0
\end{eqnarray}
where $\mu_{m,n}$ denotes the decision for the state $(\hat{g}, \hat{z}) = (\bar{g}_m, \bar{z}_n)$. 
The subsequent \emph{policy improvement} is specified by 
\begin{equation}
\mu^{(i+1)}_{m,n} = \arg\max_{x\in\{0,1\}} \l[G(\bar{g}_m, \bar{z}_n, x) + \sum_{k,\ell}A^{(i)}_{k,\ell}\tilde{P}_{k,m}P_{\ell,n}(x)\r],\quad \forall \ m,n.
\end{equation}
The policy iteration terminates if $\mu^{(i+1)}_{m,n} = \mu^{(i)}_{m,n}\ \forall \ m,n$. For the simulation in Section~\ref{Section:Sim}, the policy iteration converges typically within several iterations.

\subsection{Policy for Continuous State Space}\label{Section:PefectFb:CSpace}
In this section, we consider the case where $(g, z)$ is directly used as the controller input and design the controller by solving $\mathcal{A}(\mathcal{X}, \mathcal{U}, f_x, G)$. The resultant optimal feedback control policy $\mathcal{P}^\star$ is proved to be of the threshold type.   Specifically, we show that the threshold structure of $\hat{\mathcal{P}}^\star$ as given in Theorem~\ref{Theo:Policy:Quant} holds in the limit  of high quantization resolution ($M\rightarrow\infty$ and $N\rightarrow\infty$).

The proof of this result uses those in \cite{Bertseka:ConvergeDiscretDynamicProg:75}, which addresses the validity of approximately solving  a discounted-reward (or discounted-cost) problem with  a continuous state space by quantizing the space and using DP. A key result in \cite{Bertseka:ConvergeDiscretDynamicProg:75} states that the approximate solution converges to the continuous-space counterpart as the space-quantization error reduces to zero. This  requires that the reward-per-stage and the state transition kernel are Lipschitz continuous. To state this result mathematically, some notation is introduced. Consider an infinite-horizon discounted reward problem with a compact state space $\mathcal{X}'$ and a finite control space $\mathcal{U}'$. Let $x'\in \mathcal{X}'$ denote the state with a transition PDF  $\acute{f}(x'_{t+1}\mid x'_t, \mu'_t)$ for $\mu'_t\in\mathcal{U}'$. Given a  set of grid points $\hat{\mathcal{X}}'$ in $\mathcal{X}'$, $\hat{x}'\in\hat{\mathcal{X}}'$ results from quantizing $x'$, namely that $\hat{x}' = \mathcal{Q}(x'):=\min_{a\in\hat{\mathcal{X}}'}\|x'-a\|^2$. Let the matching state transition kernel be represented by $\bP'_x$.  Define the maximum quantization error as $d_s := \max_{x\in \mathcal{X}'}\max_{\hat{x}'\in\hat{\mathcal{X}}'}\|x' - \hat{x}'\|$.  Let $E_\beta$ and $\hat{E}_\beta$ denote the discounted rewards obtained by solving $\mathcal{D}(\mathcal{X}', \mathcal{U}', \acute{f}, \acute{G})$ and $\mathcal{D}(\hat{\mathcal{X}'}, \mathcal{U}', \bP'_x, \acute{G})$, respectively, where $\acute{G}$ is a reward-per-stage function. A key result in \cite{Bertseka:ConvergeDiscretDynamicProg:75} is stated in the following lemma. 
\begin{lemma}[\cite{Bertseka:ConvergeDiscretDynamicProg:75}]\label{Lem:Converge}  
Assume the reward-per-stage function $\acute{G}$ and $\acute{f}$ satisfy the following Lipschitz conditions
\begin{eqnarray}
\|\acute{G}(a, \mu') - \acute{G}(b, \mu')\| &\leq& V \| a - b\| \nn\\
\|\acute{f}(x_{t+1}\mid x_t=a, \mu_t) - \acute{f}(x_{t+1}\mid x'_t=b, \mu_t)\| &\leq& W \|a-b\|\nn
\end{eqnarray}
where $a, b\in\mathcal{X}$, and $V$ and $W$ are positive constants. Then 
\begin{equation}\label{Eq:Converge}
\lim_{d_s\rightarrow 0}\sup_{x'\in\mathcal{X}'}\l|  E_\beta(x') - \hat{E}_\beta(\mathcal{Q}(x'))\r|=0. 
\end{equation}
\end{lemma}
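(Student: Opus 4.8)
Since Lemma~\ref{Lem:Converge} is quoted verbatim from \cite{Bertseka:ConvergeDiscretDynamicProg:75}, the ``proof'' is really a pointer to that reference; here I sketch the contraction-mapping argument I would reproduce. Let $T$ denote the Bellman operator of $\mathcal{D}(\mathcal{X}', \mathcal{U}', \acute{f}, \acute{G})$, acting on bounded $q:\mathcal{X}'\to\mathds{R}$ by $(Tq)(x)=\max_{\mu'\in\mathcal{U}'}\bigl[\acute{G}(x,\mu')+\beta\int q(x')\acute{f}(x'\mid x,\mu')\,dx'\bigr]$. Because $\mathcal{X}'$ is compact and $\acute{G}$ is continuous (hence $\|\acute{G}\|_\infty<\infty$), $T$ is a $\beta$-contraction in the sup-norm, so $E_\beta$ is its unique fixed point; likewise the discrete operator $\hat{T}$ built from $\bP'_x$ has unique fixed point $\hat{E}_\beta$, and $\|\hat{E}_\beta\|_\infty\le\|\acute{G}\|_\infty/(1-\beta)$ \emph{uniformly} in the discretization.

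The key device is to lift $\hat{E}_\beta$ back to the continuous space. Define $\tilde{E}_\beta:=\hat{E}_\beta\circ\mathcal{Q}$, i.e.\ $\tilde{E}_\beta$ is constant on each quantization cell $C_{\hat x'}:=\{x\in\mathcal{X}': \mathcal{Q}(x)=\hat x'\}$. With the natural construction of $\bP'_x$ --- the transition probability from a grid point $\hat x'$ into the cell $C_{\hat y'}$ being $\int_{C_{\hat y'}}\acute{f}(x'\mid\hat x',\mu')\,dx'$ --- one has, for every grid point, $\sum_{\hat y'}\hat{E}_\beta(\hat y')\int_{C_{\hat y'}}\acute{f}(x'\mid\hat x',\mu')\,dx'=\int_{\mathcal{X}'}\tilde{E}_\beta(x')\acute{f}(x'\mid\hat x',\mu')\,dx'$, because $\{C_{\hat y'}\}$ partitions $\mathcal{X}'$ and $\tilde{E}_\beta\equiv\hat{E}_\beta(\hat y')$ on $C_{\hat y'}$. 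Hence $\tilde{E}_\beta(x)=\max_{\mu'}\bigl[\acute{G}(\mathcal{Q}(x),\mu')+\beta\int\tilde{E}_\beta(x')\acute{f}(x'\mid\mathcal{Q}(x),\mu')\,dx'\bigr]$ for all $x\in\mathcal{X}'$: $\tilde{E}_\beta$ is an exact fixed point of the operator obtained from $T$ by replacing the conditioning point $x$ with $\mathcal{Q}(x)$ in both $\acute{G}$ and $\acute{f}$.

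Now subtract this identity from $(T\tilde{E}_\beta)(x)$. Since $\max$ and $q\mapsto\int q\,d\nu$ are $1$-Lipschitz in sup-norm, $\|x-\mathcal{Q}(x)\|\le d_s$, and (keeping $\tilde{E}_\beta$ itself as the integrand throughout, so no value re-discretization error enters), one obtains $\bigl|(T\tilde{E}_\beta)(x)-\tilde{E}_\beta(x)\bigr|\le V d_s+\beta\|\tilde{E}_\beta\|_\infty\!\int_{\mathcal{X}'}\!\bigl|\acute{f}(x'\mid x,\mu')-\acute{f}(x'\mid\mathcal{Q}(x),\mu')\bigr|dx'\le\bigl(V+\tfrac{\beta W\lambda(\mathcal{X}')\|\acute{G}\|_\infty}{1-\beta}\bigr)d_s=:Cd_s$, where $\lambda(\mathcal{X}')<\infty$ is the Lebesgue measure of the compact set $\mathcal{X}'$. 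Because $E_\beta=TE_\beta$ and $T$ is a $\beta$-contraction, $\|E_\beta-\tilde{E}_\beta\|_\infty\le\|TE_\beta-T\tilde{E}_\beta\|_\infty+\|T\tilde{E}_\beta-\tilde{E}_\beta\|_\infty\le\beta\|E_\beta-\tilde{E}_\beta\|_\infty+Cd_s$, so $\|E_\beta-\tilde{E}_\beta\|_\infty\le\tfrac{C}{1-\beta}d_s$. As $\tilde{E}_\beta(x)=\hat{E}_\beta(\mathcal{Q}(x))$, this is precisely $\sup_{x'\in\mathcal{X}'}\bigl|E_\beta(x')-\hat{E}_\beta(\mathcal{Q}(x'))\bigr|\le\tfrac{C}{1-\beta}d_s\to0$ as $d_s\to0$, which is \eqref{Eq:Converge}.

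The delicate step is the residual estimate $\|T\tilde{E}_\beta-\tilde{E}_\beta\|_\infty\le Cd_s$: one must verify the fixed-point identity for the piecewise-constant lift $\tilde{E}_\beta$, and then show the transition-kernel perturbation, after integration against $\tilde{E}_\beta$ over $\mathcal{X}'$, is $O(d_s)$. This is exactly where compactness (finiteness of $\lambda(\mathcal{X}')$) and the uniform bound on $\|\hat{E}_\beta\|_\infty$ are indispensable --- and it is the reason that, when this lemma is applied to Theorem~\ref{Theo:Policy:Quant} and Proposition~\ref{Prop:Policy:Cont}, the unbounded tail of $g$ must first be truncated with $\Pr(g\ge\tilde{g}_{M-1})\to0$ so that the effective state space is compact.
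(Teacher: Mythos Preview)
The paper does not prove Lemma~\ref{Lem:Converge} at all; it simply quotes the result from \cite{Bertseka:ConvergeDiscretDynamicProg:75} and uses it as a black box. You correctly recognized this at the outset, and the contraction-mapping sketch you supply --- lifting $\hat{E}_\beta$ to a piecewise-constant function on $\mathcal{X}'$, bounding the Bellman residual $\|T\tilde{E}_\beta-\tilde{E}_\beta\|_\infty$ via the two Lipschitz constants, and then closing with the standard $\beta$-contraction inequality --- is sound and is indeed the argument in Bertsekas' paper. Your closing remark about why compactness is needed, and how the paper later handles the unbounded $g$-axis by truncation, is also accurate and matches the role this lemma plays in the proof of Proposition~\ref{Prop:Policy:Cont}.
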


Lemma~\ref{Lem:Converge} cannot be directly applied to extending the threshold structure of $\hat{\mathcal{P}}^\star$ in Theorem~\ref{Theo:Policy:Quant}  to the continuous-space  counterpart $\mathcal{P}^\star$. The reason is that the continuous state space $\mathcal{X}$ is unbounded and thus not compact. As a result, it is not guaranteed that  $d_s\rightarrow 0$ for $M,N\rightarrow\infty$, which, however, is required for the convergence in \eqref{Eq:Converge}.

The main result of this section is given in the following proposition. To overcome the mentioned difficulty on directly applying Lemma~\ref{Lem:Converge}, the proof of Proposition~\ref{Prop:Policy:Cont} uses a dummy stochastic optimization problem with a bounded and continuous state space. The average reward of this problem is shown to converge to that of the target problem with a unbounded state space as the quantization resolution increases, proving the desired result. 
\begin{proposition} \label{Prop:Policy:Cont}If the transition PDF's $\tilde{f}(g_{t+1}\mid g_t)$ and $\check{f}(z_{t+1}\mid z_t)$ are Lipschitz continuous, the optimal policy $\mathcal{P}^\star$ is of the threshold type. Specifically, there exists a function $y:\mathcal{G}\rightarrow\mathcal{Z}$ such that 
\begin{equation}
\mathcal{P}^\star: \mu = \left\{\begin{aligned}
&0, && z \geq y(g)\\
&1,&& \textrm{otherwise.}
\end{aligned}\right.\label{Eq:Policy:Cont}
\end{equation}
Moreover, $y(\cdot)$ is bounded as 
 \begin{equation}\label{Eq:Threshold:Bounds}
\l(\frac{2^{-\alpha}(1+Pg)-1}{Pg}\r)^+\leq y(g) \leq 1.
\end{equation} 
\end{proposition}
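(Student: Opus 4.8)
The plan is to transfer the threshold structure of the discrete-space optimal policy (Theorem~\ref{Theo:Policy:Quant}) to the continuous-space problem by a limiting argument, with the non-compactness of $\mathcal{X}=\mathds{R}^+\times\mathcal{Z}$ absorbed into an auxiliary truncated ("dummy") problem. Exactly as in Section~\ref{Section:PefectFb:QSpace}, I would first establish the threshold form for the discounted problem $\mathcal{D}(\mathcal{X},\mathcal{U},f_x,G)$ and then recover the average-reward statement by letting $\beta\rightarrow1$: since the thresholds $y_\beta(g)\in[0,1]$ are uniformly bounded, a subsequence converges to a threshold function that is optimal for $\mathcal{A}(\mathcal{X},\mathcal{U},f_x,G)$ by the standard vanishing-discount argument. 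The obstruction is that Lemma~\ref{Lem:Converge} needs a compact state space, which $\mathcal{X}$ is not, and (for the equal-probability quantization of $g$) $d_s\not\rightarrow0$.

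First I would introduce, for a truncation level $b>0$, a dummy discounted problem on the compact space $\mathcal{X}^{(b)}:=[0,b]\times\mathcal{Z}$ whose channel-power kernel $\tilde f^{(b)}$ is a truncation of $\tilde f$ (e.g.\ conditioned to remain in $[0,b]$, or a smooth reflection at $b$), chosen so that $\tilde f^{(b)}$ stays Lipschitz and $\{z_t\}$ retains the stochastic dominance of Assumption~\ref{AS:SD}; $G$ and the factor $f$ are left unchanged. On the compact set $\mathcal{X}^{(b)}$ the product kernel $f_x^{(b)}=\tilde f^{(b)}\times f$ and $G$ are Lipschitz (using the hypotheses that $\tilde f$ and $\check f$ are Lipschitz and that $\log_2(1+Pgz)$ has bounded gradient on a bounded set), so I may quantize $\mathcal{X}^{(b)}$ with equal-length cells — making $d_s\rightarrow0$ — and invoke Lemma~\ref{Lem:Converge}: the discrete-space discounted values converge uniformly to the dummy continuous value $J_\beta^{(b)}$. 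Because the dummy problem still obeys Assumptions~\ref{AS:Markov}--\ref{AS:SD} and the decoupling \eqref{Eq:Kernel}, Theorem~\ref{Theo:Policy:Quant} and Lemma~\ref{Lemma:Monotone} apply to each of its quantizations; passing to the limit, $J_\beta^{(b)}(g,\cdot)$ is non-decreasing in $z$.

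The threshold structure of the dummy continuous problem then follows as in the proof of Theorem~\ref{Theo:Policy:Quant}: the no-feedback branch has reward $\log_2(1+Pgz)$, strictly increasing in $z$, and continuation value $w^{(b)}(g,z,0)$, non-decreasing in $z$ (Assumption~\ref{AS:SD} together with the monotonicity of $J_\beta^{(b)}$ just obtained), whereas the feedback branch has reward $\log_2(1+Pg)-\alpha$ and continuation value $w^{(b)}(g,1)$, both independent of $z$; hence the no-feedback advantage $\Phi^{(b)}(g,z)$ is strictly increasing in $z$, and $\mu=0$ is optimal iff $z\geq y_\beta^{(b)}(g):=\inf\{z:\Phi^{(b)}(g,z)\geq0\}$. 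I would then remove the truncation: for fixed $(g,z)$ with $g<b$, $J_\beta^{(b)}$ and $J_\beta^\star$ differ only through sample paths that exceed $b$, whose contribution to the discounted reward is bounded by $\sum_t\beta^t\,\E[\,\log_2(1+Pg_t)\,\mathbf{1}(g_t\geq b)\,]\rightarrow0$ as $b\rightarrow\infty$ (the chi-square law of $g$ has a finite log-moment). This gives $J_\beta^{(b)}\rightarrow J_\beta^\star$ uniformly on compacts, so $J_\beta^\star(g,\cdot)$ is non-decreasing in $z$, and the same advantage-function argument yields the threshold form \eqref{Eq:Policy:Cont} for $\mathcal{D}(\mathcal{X},\mathcal{U},f_x,G)$, hence — after $\beta\rightarrow1$ — for the average-reward problem; the same machinery also delivers the claimed convergence $\hat J^\star\rightarrow J^\star$.

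The bounds \eqref{Eq:Threshold:Bounds} follow by comparing the two branches directly. The upper bound $y(g)\leq1$ is immediate from $z\in[0,1]$ (indeed no feedback strictly dominates at $z=1$ when $\alpha>0$, since there $G(g,1,0)-G(g,1,1)=\alpha$ and the two continuation values coincide). For the lower bound, on the set $z<\bigl(\frac{2^{-\alpha}(1+Pg)-1}{Pg}\bigr)^+$ one has $\log_2(1+Pgz)<\log_2(1+Pg)-\alpha$, i.e.\ $G(g,z,0)<G(g,z,1)$, while the continuous analogue of Lemma~\ref{Lemma:Monotone}(3) gives $w(g,z,1)\geq w(g,z,0)$ (conditioning on $z_t=1\geq z$ stochastically dominates and $J_\beta^\star$ is non-decreasing in $z$); hence feedback strictly dominates, $\mu^\star(g,z)=1$, and $y(g)$ exceeds that value. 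I expect the main obstacle to be the truncated kernel: arranging the truncation so that Lipschitz continuity and Assumption~\ref{AS:SD} are preserved at the boundary, and proving the uniform convergence $J_\beta^{(b)}\rightarrow J_\beta^\star$ so that the threshold property — a statement about the \emph{policy} — survives limits in which only value-function convergence is guaranteed. The point that makes this go through is that one never needs the optimal policies themselves to converge, only the monotonicity in $z$ of the value function and of the no-feedback continuation value, which is stable under uniform limits.
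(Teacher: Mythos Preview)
Your proposal is correct and follows the same architecture as the paper: compactify the state space via an auxiliary truncated problem, invoke Lemma~\ref{Lem:Converge} there, show convergence of values as the truncation is lifted, and carry the threshold structure through the limit. The one substantive difference is how the truncation is built. You modify the \emph{kernel} $\tilde f$ so that $g$ is confined to $[0,b]$, and then (rightly) flag the difficulty of preserving Lipschitz continuity and Assumption~\ref{AS:SD} at the artificial boundary. The paper avoids this entirely by leaving the dynamics untouched and instead zeroing the \emph{reward} whenever $g>\tilde g_{M-1}$ (a genie ``shuts down the encoder''); the dummy value $I^\star(M)$ is then compared to $J^\star$ by splitting the $g$-integral at $\tilde g_{M-1}$ and bounding the tail contribution via Markov's inequality applied to $\E[\log_2(1+Pg)]$. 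This buys a cleaner compactification at the cost of a slightly looser coupling between the dummy and original problems. Conversely, your account of \emph{why} the threshold property survives the limits --- only the monotonicity of $J_\beta^\star(g,\cdot)$ in $z$ needs to pass through uniform value convergence, after which the advantage-function argument reruns verbatim --- is considerably more explicit than the paper's, which simply asserts that Theorem~\ref{Theo:Policy:Quant} ``holds for the virtual system'' and stops. Your derivation of the bounds in \eqref{Eq:Threshold:Bounds} matches the paper's.
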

\begin{proof}
See Appendix~\ref{App:Policy:Cont}. 
\end{proof}
We offer the following remarks. 
\begin{enumerate}
\item The feedback probability $\Pr(\mu = 1)$ is strictly larger than zero based on the following argument. For an arbitrary value of $z$, there exist $x>0$ such that the rate function $\log_2(1+Pg)-\alpha>\log_2(1+Pgz)  \ \forall \ g\geq x$, corresponding to $\mu=1$ (cf. \eqref{Eq:RewardPerStage} and Lemma~\ref{Lemma:Monotone}). Since $g$ follows the chi-square distribution, $\Pr(g> x)  >0$ and thus $\Pr(\mu = 1) > 0$. This justifies the above claim. 

\item The continuous-space policy $\mathcal{P}^\star$ cannot be directly computed  using DP but can be approximated by interpolating the discrete-space counterpart $\hat{\mathcal{P}}^\star$ in Theorem~\ref{Theo:Policy:Quant} \cite{Bertsekas07:DynamicProg}. The approximation accuracy improves   with increasing  quantization resolution specified by $M$ and $N$ at the cost of rapidly growing computation complexity. 
 
\end{enumerate}

%---------------------------------
\section{Feedback Control Policy: Finite-Rate Feedback Channel}\label{Section:LimFb}

A perfect feedback channel is assumed for the analysis in the preceding section. In this section, we consider a finite-rate feedback channel. The optimal feedback control policy is shown to remain as the threshold type. The maximum average reward for finite-rate feedback is shown to be equal to that for perfect feedback at an increased feedback price. Feedback control considered in this section has the discrete state space $\hat{\mathcal{X}}$ as defined in Section~\ref{Section:QuantAlgo}. The results in this section can be extended straightforwardly  to feedback control with the continuous state space $\mathcal{X}$ following the approach in Section~\ref{Section:PefectFb:CSpace}. The details are omitted for brevity.

Consider the average reward problem $\mathcal{A}(\hat{\mathcal{X}}, \mathcal{U}, \tilde{\bP}\times \bP_\epsilon, G_\epsilon)$ that approximates  $\mathcal{A}(\mathcal{X}, \mathcal{U}, f_x^\epsilon, G_\epsilon)$ formulated in Section~\ref{Section:ProbForm}, where $G_\epsilon$ is in 
\eqref{Eq:RewardPerStage:Quant} and 
\begin{equation}\label{Eq:TransProb:Eps}
[\bP_\epsilon(1)]_m = \int_{\tilde{z}_m}^{\tilde{z}_{m+1}} \underset{\epsilon}{\E}[f(\hat{z}_{t+1}=\tau\mid \hat{z}_t = \epsilon)] d\tau
\end{equation}
and $\bP_\epsilon(0) = \bP(0)$. As specified in the following lemma, $\bP_\epsilon$ and $G_\epsilon$ are observed from \eqref{Eq:TransProb:Eps} to have the same properties as their counterparts for the case of perfect feedback considered in Section~\ref{Section:PefectFb}. 

\begin{lemma}\label{Lem:Mono:QFb}\
\begin{enumerate}
\item For $\hat{x} = (\hat{g}, \hat{z})$, $G_\epsilon(\hat{x}, \mu)$ monotonically increases with $\hat{z}$; $G_\epsilon(\hat{x}, 1)$ is independent of $\hat{z}$. 
\item $\bP_\epsilon$ is monotone and has identical columns. 
\end{enumerate}
\end{lemma}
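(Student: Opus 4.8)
The plan is to check that $G_\epsilon$ and $\bP_\epsilon$ reproduce, item for item, the structural properties of $G$ and $\bP$ that the perfect-feedback development of Section~\ref{Section:PefectFb} rests on --- the $z$-monotonicity and $z$-independence of the reward-per-stage, and the monotonicity and source-independence of the transition kernel --- so that Lemma~\ref{Lemma:Monotone} and Theorem~\ref{Theo:Policy:Quant} go through verbatim for $\mathcal{A}(\hat{\mathcal{X}},\mathcal{U},\tilde{\bP}\times\bP_\epsilon,G_\epsilon)$. Both parts are essentially read off from the definitions \eqref{Eq:RewardPerStage:Quant} and \eqref{Eq:TransProb:Eps}, so the written proof is short.

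For part~1 I would inspect the two branches of \eqref{Eq:RewardPerStage:Quant}. When $\mu=1$ the reward-per-stage is $\underset{\epsilon}{\E}[\log_2(1+P\hat{g}\epsilon)]-\alpha$, in which $\hat{z}$ does not appear, so $G_\epsilon(\hat{x},1)$ is independent of $\hat{z}$ (and hence non-decreasing in it); when $\mu=0$ it is $\log_2(1+P\hat{g}\hat{z})$, which is strictly increasing in $\hat{z}$ because $P\hat{g}>0$ and $\log_2(\cdot)$ is increasing. For part~2 I would treat the two feedback actions separately. The action $\mu=0$ needs nothing new, since $\bP_\epsilon(0)=\bP(0)$ and $\bP(0)$ was already shown monotone from Assumption~\ref{AS:SD} and \eqref{Eq:TxProb:z}. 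For $\mu=1$ the point is that the right-hand side of \eqref{Eq:TransProb:Eps} depends only on the destination bin index and not on the current state; writing $[\bP_\epsilon(1)]_{m,n}$ for the $(m,n)$ entry, this means $[\bP_\epsilon(1)]_{m,n}=[\bP_\epsilon(1)]_m$ for every $n$, i.e. $\bP_\epsilon(1)$ has identical columns. Then the tail sum $\sum_{m=m_0}^{N-1}[\bP_\epsilon(1)]_{m,n}$ appearing in the monotonicity definition preceding Lemma~\ref{Lem:MonoProd} is independent of $n$, so the defining inequality holds with equality and $\bP_\epsilon(1)$ is monotone as well.

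The only step calling for a moment of care --- and the nearest thing here to a real obstacle --- is confirming that the extra expectation over the fresh quantization error $\epsilon$ in \eqref{Eq:TransProb:Eps} does not spoil any of this. It does not: averaging the family of conditional densities $\{\check{f}(\cdot\mid z_t=\epsilon)\}$ against the law of $\epsilon$ yields one fixed density that is still independent of the current state, so the source-independence of $\bP_\epsilon(1)$ --- and therefore its identical-columns and monotonicity properties --- is preserved; this is exactly parallel to the perfect-feedback case, where $\bP(1)$ already had source-independent columns. With Lemma~\ref{Lem:Mono:QFb} established, I would then re-run the value-iteration monotonicity argument and the threshold-structure argument of Section~\ref{Section:PefectFb} for the finite-rate channel.
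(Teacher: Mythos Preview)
Your proposal is correct and matches the paper's treatment: the paper does not supply a separate proof of Lemma~\ref{Lem:Mono:QFb} but simply states that the properties are ``observed from \eqref{Eq:TransProb:Eps}'' to be the same as those of $G$ and $\bP$ in the perfect-feedback case. Your branch-by-branch reading of \eqref{Eq:RewardPerStage:Quant} and \eqref{Eq:TransProb:Eps}, together with the observation that $\bP_\epsilon(0)=\bP(0)$ inherits monotonicity while $\bP_\epsilon(1)$ has source-independent (hence identical) columns and is therefore trivially monotone, is exactly the verification the paper leaves implicit.
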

Using Lemma~\ref{Lem:Mono:QFb}, we have the following corollary of Theorem~\ref{Theo:Policy:Quant}.
\begin{corollary}\label{Cor:Policy:QuantFb}
For quantized feedback, the optimal feedback control policy $\hat{\mathcal{P}}^\star_\epsilon$ resulting from solving $\mathcal{A}(\hat{\mathcal{X}}, \mathcal{U}, \tilde{\bP}\times \bP_\epsilon, G_\epsilon)$ is of the same threshold type as specified in Theorem~\ref{Theo:Policy:Quant}.
\end{corollary}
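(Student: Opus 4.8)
The plan is to observe that Corollary~\ref{Cor:Policy:QuantFb} is obtained by re-running, essentially verbatim, the chain of arguments behind Theorem~\ref{Theo:Policy:Quant}, once one notices that the proof of that theorem uses the reward-per-stage $G$ and the transition matrix $\bP$ only through a short list of structural properties, and that Lemma~\ref{Lem:Mono:QFb} establishes exactly those properties for the finite-rate counterparts $G_\epsilon$ and $\bP_\epsilon$. Concretely, the problem $\mathcal{A}(\hat{\mathcal{X}},\mathcal{U},\tilde{\bP}\times\bP_\epsilon,G_\epsilon)$ has the same architecture as $\mathcal{A}(\hat{\mathcal{X}},\mathcal{U},\bP_x,G)$: the $\hat{g}$-dynamics are driven by the same $\tilde{\bP}$ (unaffected by feedback), the kernel still factorizes as $\tilde{\bP}\times\bP_\epsilon$, and only the $\hat{z}$-transition and the per-stage reward change. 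Hence the DP operator $\mathsf{F}_\epsilon$ is obtained from \eqref{Eq:DPOperator} by the substitutions $G\to G_\epsilon$ and $P_{\ell,n}(\mu)\to[\bP_\epsilon(\mu)]_{\ell,n}$, and Bellman's equation $\hat{J}^\star_{\beta,\epsilon}=\mathsf{F}_\epsilon\hat{J}^\star_{\beta,\epsilon}$ holds as before.

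First I would re-derive the analogue of Lemma~\ref{Lemma:Monotone} for the finite-rate problem, namely that (i) $\hat{J}^\star_{\beta,\epsilon}(\hat{g},\hat{z})$ is nondecreasing in $\hat{z}$ for fixed $\hat{g}$; (ii) $w_\epsilon(m,n,\mu):=\sum_{k,\ell}\hat{J}^\star_{\beta,\epsilon}(k,\ell)\tilde{P}_{m,k}[\bP_\epsilon(\mu)]_{n,\ell}$ is nondecreasing in $n$ for fixed $m,\mu$; and (iii) $w_\epsilon(m,n,1)\geq w_\epsilon(m,n,0)$ for all $n$. The value-iteration proof in Appendix~\ref{App:Monotone} invokes only: monotonicity of $G$ in $\hat{z}$ and $\hat{z}$-independence of $G(\cdot,1)$ (supplied for $G_\epsilon$ by item~1 of Lemma~\ref{Lem:Mono:QFb}); monotonicity of $\bP$ together with the identical-columns property of $\bP(1)$, which makes $w(m,n,1)$ independent of $n$ (supplied for $\bP_\epsilon$ by item~2 of Lemma~\ref{Lem:Mono:QFb}); and Lemma~\ref{Lem:MonoProd}, which is model-independent. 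Each of these applications transfers line by line to the $\epsilon$-subscripted quantities, so the induction $\mathsf{F}_\epsilon^{\,k}q$ preserves monotonicity in $\hat{z}$ exactly as in the perfect-feedback case, and letting $k\to\infty$ in \eqref{Eq:ValIt} yields (i)--(iii).

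Given (i)--(iii), the threshold structure follows as in Appendix~\ref{App:Policy:Quant}: at state $(\bar{g}_m,\bar{z}_n)$ the decision $\mu=1$ is optimal iff $G_\epsilon(\bar{g}_m,\bar{z}_n,1)+\beta w_\epsilon(m,n,1)\geq G_\epsilon(\bar{g}_m,\bar{z}_n,0)+\beta w_\epsilon(m,n,0)$. The left-hand side is constant in $n$ (by item~1 of Lemma~\ref{Lem:Mono:QFb} and the identical-columns property), while the right-hand side is nondecreasing in $n$ (by item~1 of Lemma~\ref{Lem:Mono:QFb} and property~(ii)); hence $\{\,n:\mu=0\text{ optimal}\,\}$ is an up-set of $\{0,\dots,N-1\}$, which defines, for each $\hat{g}=\bar{g}_m$, a threshold $\hat{y}_\epsilon(\bar{g}_m)$ on $\hat{z}$. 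Passing $\beta\to1$ as in the proof of Theorem~\ref{Theo:Policy:Quant} carries the threshold form from $\hat{\mathcal{P}}^\star_{\beta,\epsilon}$ to the average-reward policy $\hat{\mathcal{P}}^\star_\epsilon$, giving \eqref{Eq:Policy:Quant} with $\hat{y}$ replaced by $\hat{y}_\epsilon$ (and, should one wish to track the bound, with $\log_2(1+P\hat{g})$ replaced by $\underset{\epsilon}{\E}[\log_2(1+P\hat{g}\epsilon)]$ in \eqref{Eq:Threshold:Bounds:Dist}).

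The main point requiring care—rather than a genuine obstacle—is to confirm that the proof of Theorem~\ref{Theo:Policy:Quant} truly uses no property of $(G,\bP)$ beyond those collected in Lemma~\ref{Lem:Mono:QFb}; in particular, that the threshold conclusion itself relies only on monotonicity of the reward in $\hat{z}$ and on $\hat{z}$-independence of the $\mu=1$ branch, not on the explicit form $\log_2(1+Pgz)$ nor on the lower bound in \eqref{Eq:Threshold:Bounds:Dist}. Since Lemma~\ref{Lem:Mono:QFb} is crafted precisely to package these hypotheses for the finite-rate model, no new estimate is needed and the corollary reduces to quoting Theorem~\ref{Theo:Policy:Quant} with its inputs verified.
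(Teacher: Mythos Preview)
Your proposal is correct and follows exactly the route the paper intends: the paper offers no separate proof for this corollary, simply stating that Lemma~\ref{Lem:Mono:QFb} packages for $(G_\epsilon,\bP_\epsilon)$ the structural hypotheses on $(G,\bP)$ actually used in proving Theorem~\ref{Theo:Policy:Quant}, so the corollary is immediate. Your write-up spells out the verification in more detail than the paper does, but the argument is the same.
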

Given feedback inaccuracy due to quantization, feedback may not be always desirable even if it is free ($\alpha=0$). Thus the first claim in Corollary~\ref{Cor:Policy:ExtremeFb} does not hold for quantized feedback as confirmed by simulation (cf. Fig.~\ref{Fig:ThresholdVsPrice}).

It can be observed from \eqref{Eq:RewardPerStage:Quant} and \eqref{Eq:TransProb:Eps} that quantized feedback affects both the award-per-stage and the dynamics of $\hat{z}$. The joint effects on the maximum average reward cannot be characterized using simple expressions. To provide insight into these effects, they are analyzed separately. For this purpose, define the function $\hat{J}^\star(A,  \bB)$ that gives the maximum average reward of $\mathcal{A}(\hat{\mathcal{X}}, \mathcal{U}, \tilde{\bP}\times \bB, A)$.    Then the effects of feedback quantization are specified in the following proposition. 
\begin{proposition}\label{Prop:AvAward:LimFb}
 The function $\hat{J}^\star(\cdot, \cdot)$ satisfies the following inequalities:  
 \begin{enumerate}
\item $\hat{J}^\star(G, \bP) \geq \hat{J}^\star(G, \bP_\epsilon) \geq \hat{J}^\star(G_\epsilon, \bP_\epsilon)$
\item $\hat{J}^\star(G, \bP) \geq \hat{J}^\star(G_\epsilon, \bP) \geq \hat{J}^\star(G_\epsilon, \bP_\epsilon)$
\item $\hat{J}^\star(G_\epsilon, \bP, \alpha) \geq \hat{J}^\star(G, \bP, \alpha-\E[\log_2\epsilon])$
\item $\hat{J}^\star(G_\epsilon, \bP_\epsilon, \alpha) \geq \hat{J}^\star(G, \bP_\epsilon, \alpha-\E[\log_2\epsilon])$. 
\end{enumerate}
\end{proposition}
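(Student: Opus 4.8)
The plan is to derive all four inequalities from two monotonicity principles for the discrete-state average-reward problem, each obtained by running value iteration on the discounted problem (cf.\ \eqref{Eq:DPOperator}, \eqref{Eq:ValIt}) and then letting $\beta\to 1$ \cite{Bertsekas07:DynamicProg}. Principle (i) (\emph{monotonicity in the reward}): if $A_1\ge A_2$ pointwise on $\hat{\mathcal{X}}\times\mathcal{U}$, then $\hat{J}^\star(A_1,\bB)\ge \hat{J}^\star(A_2,\bB)$ for any kernel $\bB$, because the DP operator is isotone in both the running value function and the reward-per-stage, so $\mathsf{F}^k_{A_1}q\ge\mathsf{F}^k_{A_2}q$ for every $k$ and every starting $q$. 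Principle (ii) (\emph{monotonicity in the kernel}): if the optimal discounted value function for reward $A$ is monotone in $\hat{z}$ for each fixed $\hat{g}$, and the one-step kernels satisfy $\bB_1(0)=\bB_2(0)$ with $\bB_1(1)$ stochastically dominating $\bB_2(1)$, then $\hat{J}^\star(A,\bB_1)\ge\hat{J}^\star(A,\bB_2)$. I would prove (ii) by a double induction starting from $q_0\equiv 0$: each iterate $\mathsf{F}^k_{\bB_i}q_0$ stays monotone in $\hat{z}$ (this is exactly the argument in the proof of Lemma~\ref{Lemma:Monotone}, valid for $\bB_i\in\{\bP,\bP_\epsilon\}$ since both are monotone), and $\mathsf{F}^k_{\bB_1}q_0\ge\mathsf{F}^k_{\bB_2}q_0$ --- for the $\mu=0$ branch the kernels coincide so domination of the value functions suffices, and for the $\mu=1$ branch one chains ``larger value function $\Rightarrow$ larger expectation under the common kernel $\bB_1(1)$'' with ``$\bB_1(1)$ dominates $\bB_2(1)$ $\Rightarrow$ larger expectation of the (monotone in $\hat{z}$) value function'', the latter being the stochastic-dominance relation exploited via Lemma~\ref{Lem:MonoProd}; the outer $\max_\mu$ in \eqref{Eq:DPOperator} then preserves the ordering since it holds branch by branch.

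Next I would record three pointwise facts about the reward-per-stage functions \eqref{Eq:RewardPerStage}, \eqref{Eq:RewardPerStage:Quant} and the one-step kernels, all immediate from $0\le\epsilon\le 1$ and monotonicity of $\log_2$: (A) $G(\hat{g},\hat{z},\mu)\ge G_\epsilon(\hat{g},\hat{z},\mu)$ for all $(\hat{g},\hat{z},\mu)$, since the $\mu=0$ branches coincide and the $\mu=1$ gap is $\log_2(1+P\hat{g})-\underset{\epsilon}{\E}[\log_2(1+P\hat{g}\epsilon)]\ge 0$; (B) $\bP(0)=\bP_\epsilon(0)$ and $\bP(1)$ stochastically dominates $\bP_\epsilon(1)$, because perfect feedback resets $\hat{z}$ to $1$ whereas quantized feedback resets it to $\epsilon\le 1$, and by Assumption~\ref{AS:SD} conditioning $z_{t+1}$ on a larger value of $z_t$ is stochastically larger (dominance surviving the $\underset{\epsilon}{\E}$-mixture in \eqref{Eq:TransProb:Eps}); and (C) $G_\epsilon(\hat{g},\hat{z},\mu;\alpha)\ge G(\hat{g},\hat{z},\mu;\alpha-\E[\log_2\epsilon])$ for all $(\hat{g},\hat{z},\mu)$, since the $\mu=0$ branches coincide and do not involve the price, while the $\mu=1$ gap equals $\underset{\epsilon}{\E}[\log_2(1/\epsilon+P\hat{g})]-\log_2(1+P\hat{g})\ge 0$ because $1/\epsilon\ge 1$.

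With these in hand the four claims are assembled as follows. Claim 1: $\hat{J}^\star(G,\bP)\ge\hat{J}^\star(G,\bP_\epsilon)$ is principle (ii) with reward $G$ --- whose optimal value function is monotone in $\hat{z}$ by Lemma~\ref{Lemma:Monotone}(1) --- and kernels ordered by (B); $\hat{J}^\star(G,\bP_\epsilon)\ge\hat{J}^\star(G_\epsilon,\bP_\epsilon)$ is principle (i) with (A). Claim 2: $\hat{J}^\star(G,\bP)\ge\hat{J}^\star(G_\epsilon,\bP)$ is (i) with (A); $\hat{J}^\star(G_\epsilon,\bP)\ge\hat{J}^\star(G_\epsilon,\bP_\epsilon)$ is (ii) with reward $G_\epsilon$ --- whose optimal value function is monotone in $\hat{z}$ by the version of Lemma~\ref{Lemma:Monotone} guaranteed by Lemma~\ref{Lem:Mono:QFb}, as already used in Corollary~\ref{Cor:Policy:QuantFb} --- and kernels ordered by (B). Claims 3 and 4: each is principle (i) applied with the common kernel ($\bP$, respectively $\bP_\epsilon$) and the pointwise reward inequality (C), reading the feedback price as a parameter occurring only in the $\mu=1$ branch of $G$ and $G_\epsilon$.

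I expect the only real work to be a careful proof of principle (ii): the induction must simultaneously carry two invariants (monotonicity in $\hat{z}$ of each iterate, and the pointwise domination between the two kernels' iterates), and one must check that the $\max_\mu$ in \eqref{Eq:DPOperator} does not break the ordering (it does not, the ordering being branch-by-branch) and that the average-reward passage $\beta\to 1$ is legitimate, i.e.\ $(1-\beta)\hat{J}^\star_\beta$ converges to a constant independent of the initial state --- the unichain property already invoked in Section~\ref{Section:PefectFb:QSpace}, which carries over verbatim here. Everything else is bookkeeping with \eqref{Eq:RewardPerStage}, \eqref{Eq:RewardPerStage:Quant} and \eqref{Eq:TransProb:Eps}.
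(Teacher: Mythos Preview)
Your proposal is correct and follows essentially the same route as the paper: the paper also proves the kernel comparison $\hat{J}^\star(G,\bP)\ge\hat{J}^\star(G,\bP_\epsilon)$ by a value-iteration induction that simultaneously maintains monotonicity in $\hat{z}$ (via the argument of Lemma~\ref{Lemma:Monotone}) and the pointwise ordering of iterates, using Assumption~\ref{AS:SD} to obtain the stochastic dominance of $\bP(1)$ over $\bP_\epsilon(1)$ exactly as in your fact (B); the reward comparisons in 1)--2) are handled by $G\ge G_\epsilon$ (your fact (A)), and 3)--4) by the identity $\log_2(1+P\hat{g}\epsilon)=\log_2(1/\epsilon+P\hat{g})+\log_2\epsilon$ together with $1/\epsilon\ge 1$, which is your fact (C). Your abstraction into two reusable ``principles'' is a slightly cleaner packaging of the same argument, but there is no substantive difference.
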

\begin{proof}
See Appendix~\ref{App:AvAward:LimFb}. 
\end{proof}
\noindent The inequalities in $1)$ and $2)$ state that both effects of finite-rate feedback on the award-per-stage and the dynamics of $\hat{z}$ reduce the maximum average reward with respect perfect feedback. As implied by the inequalities in $3)$ and $4)$, the reward reduction due to feedback quantization  is equivalent to that caused by the increase on the feedback price by at most the amount of $\E[\log_2\epsilon]$. For the specific  distribution of $\epsilon$ in Footnote~\ref{Fn:Eps} and $|\mathcal{F}|\gg 1$, this quantity can be approximated as \cite{Jindal:MIMOBroadcastFiniteRateFeedback:06,YeungLove:RandomVQBeamf:05}
\begin{equation}
\E[\log_2\epsilon] \approx \log_2e\times (1- \E[\epsilon])< \log_2e\times |\mathcal{F}|^{-\frac{1}{M-1}}  . 
\end{equation}
Thus for $|\mathcal{F}| \rightarrow \infty$, $\E[\log_2\epsilon]\rightarrow 0$ and the \emph{equalities} in $3)$ and $4)$ of Proposition~\ref{Prop:AvAward:LimFb} hold.

\section{Simulation Results}\label{Section:Sim}
In this section, additional insight into optimal feedback control are obtained from simulations results. In the simulation, the channel model follows Assumption~\ref{AS:Gauss}. Their temporal correlation is specified by Clarke's function \cite{Clarke74}. The state space for feedback control is quantized as discussed in Section~\ref{Section:QuantAlgo} with $M=N=16$.  The transmit SNR is $20$ dB.

\begin{figure}
\centering\includegraphics[width=11cm]{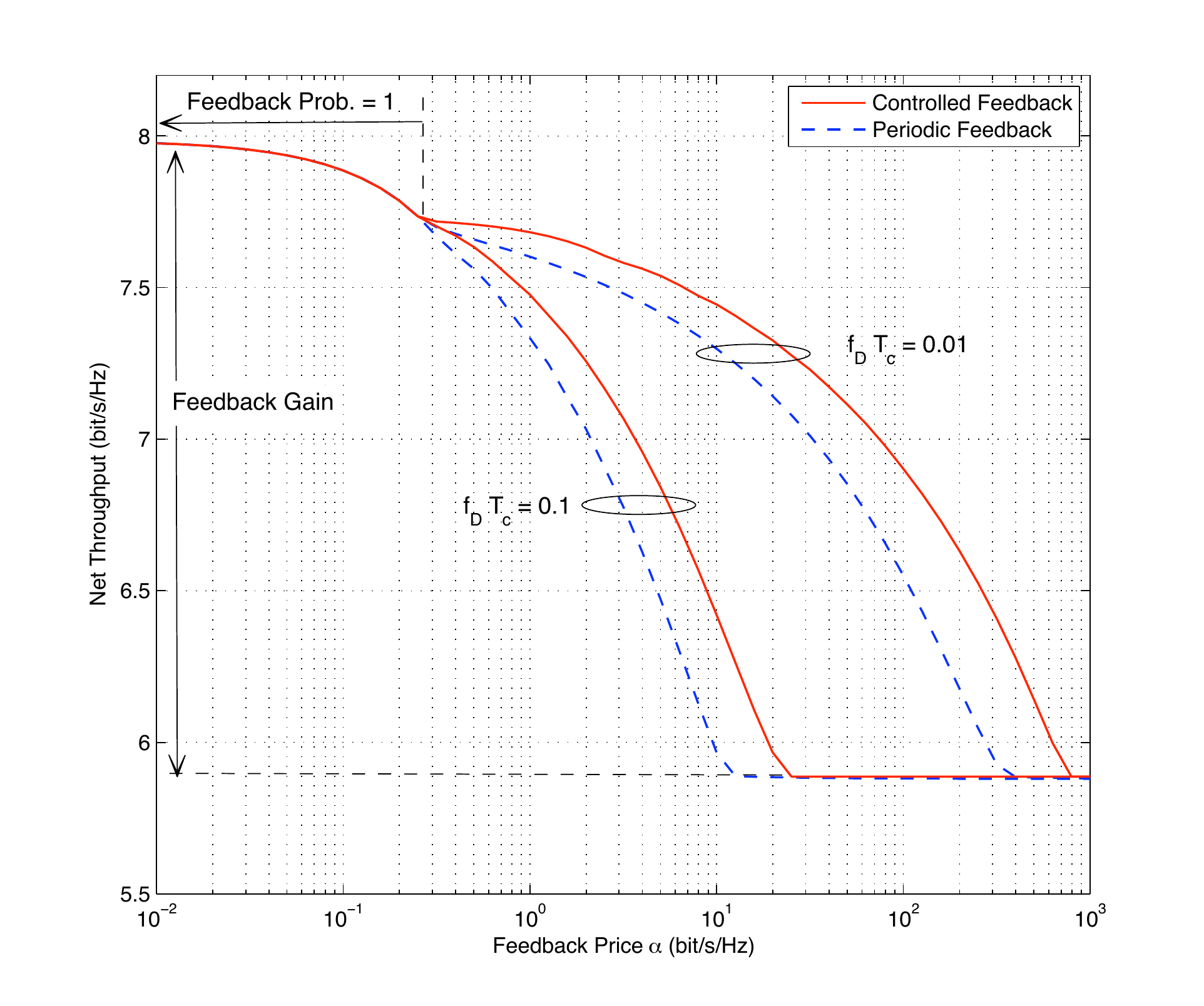}
\caption{Net throughput versus feedback price for both the optimally controlled and periodic feedback over the perfect feedback channel. The Doppler frequency is $f_D = \{0.1, 0.01\}/T_c$ and the number of transmit antenna $L=3$. }
\label{Fig:ThputVarDop}
\end{figure}

In Fig.~\ref{Fig:ThputVarDop}, the curves of net throughput versus feedback price are plotted for both the optimally controlled  feedback  and the conventional periodic feedback. The net throughput for 
controlled and periodic feedback are maximized by  value iteration \cite{Bertsekas07:DynamicProg} and a numerical search over different feedback intervals, respectively. 
The Doppler frequency is $f_D = \{0.1, 0.01\}/T_c$ and the number of transmit antenna $L=3$. As observed from Fig.~\ref{Fig:ThputVarDop}, the throughput for all cases decreases with the increasing feedback price. For high feedback prices, the curves flatten  with net throughput fixed at $5.9$ bit/s/Hz, corresponding to no feedback. 
Subtracting this value from net throughput gives the feedback gain as indicated in Fig.~\ref{Fig:ThputVarDop}.  Controlled feedback is observed to increase net throughput of periodic feedback by up to $0.5$ bit/s/Hz or $24\%$ of the feedback gain of about $2.1$  bit/s/Hz. The increment in net throughput  is insensitive to the change on Doppler frequency. Finally,  for small feedback prices ($\alpha \leq 0.15$), both feedback algorithms perform feedback in  every slot and thus all curves in Fig.~\ref{Fig:ThputVarDop} overlap in this range.

\begin{figure}
\centering\includegraphics[width=11cm]{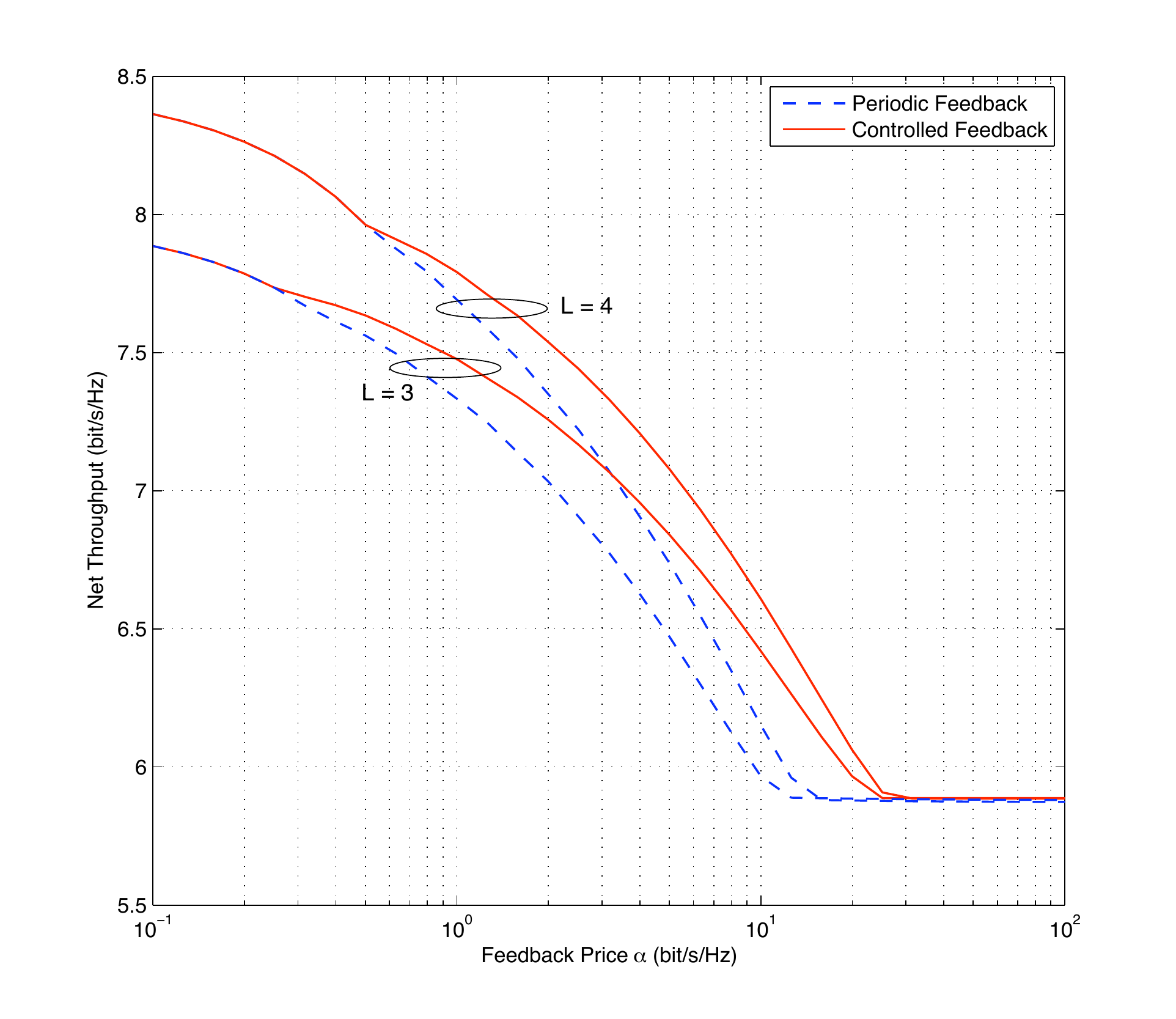}
\caption{Net throughput versus feedback price for both the optimally controlled and periodic feedback over the perfect feedback channel. The Doppler frequency is $f_D = 0.1/T_c$ and the numbers of transmit antenna $L=\{3, 4\}$.}
\label{Fig:ThputVarAnt}
\end{figure}

The comparison in Fig.~\ref{Fig:ThputVarDop} continues in Fig.~\ref{Fig:ThputVarAnt} but for different numbers of transmit antennas $L=\{3, 4\}$. It is observed that the maximum net throughput  gain for controlled feedback over the periodic feedback is about $0.5$ bit/s/Hz for both $L=3$ and $L=4$. Thus this gain is insensitive to the change on $L$.

Refer to  Footnote~\ref{Foot:MultiObj}.  The mentioned function of  maximum throughput versus  average feedback rate (normalized for $\alpha =1$) is plotted in Fig.~\ref{Fig:ThputVsFbCost} for $f_D = \{0.1, 0.01\}/T_c$ and $L=3$. Also plotted is the matching curve for periodic feedback obtained by a numerical search over different feedback intervals. As observed from the figure, for the same average feedback rate, optimal controlled  feedback provides  up to $0.5$ bit/s/Hz higher throughput than periodic feedback. Alternatively, given identical throughput, the former can reduce the feedback cost by half with respect to the latter (cf.  throughput $=7$ bit/s/Hz and $f_DT_c = 0.01$).

\begin{figure}
\centering\includegraphics[width=11cm]{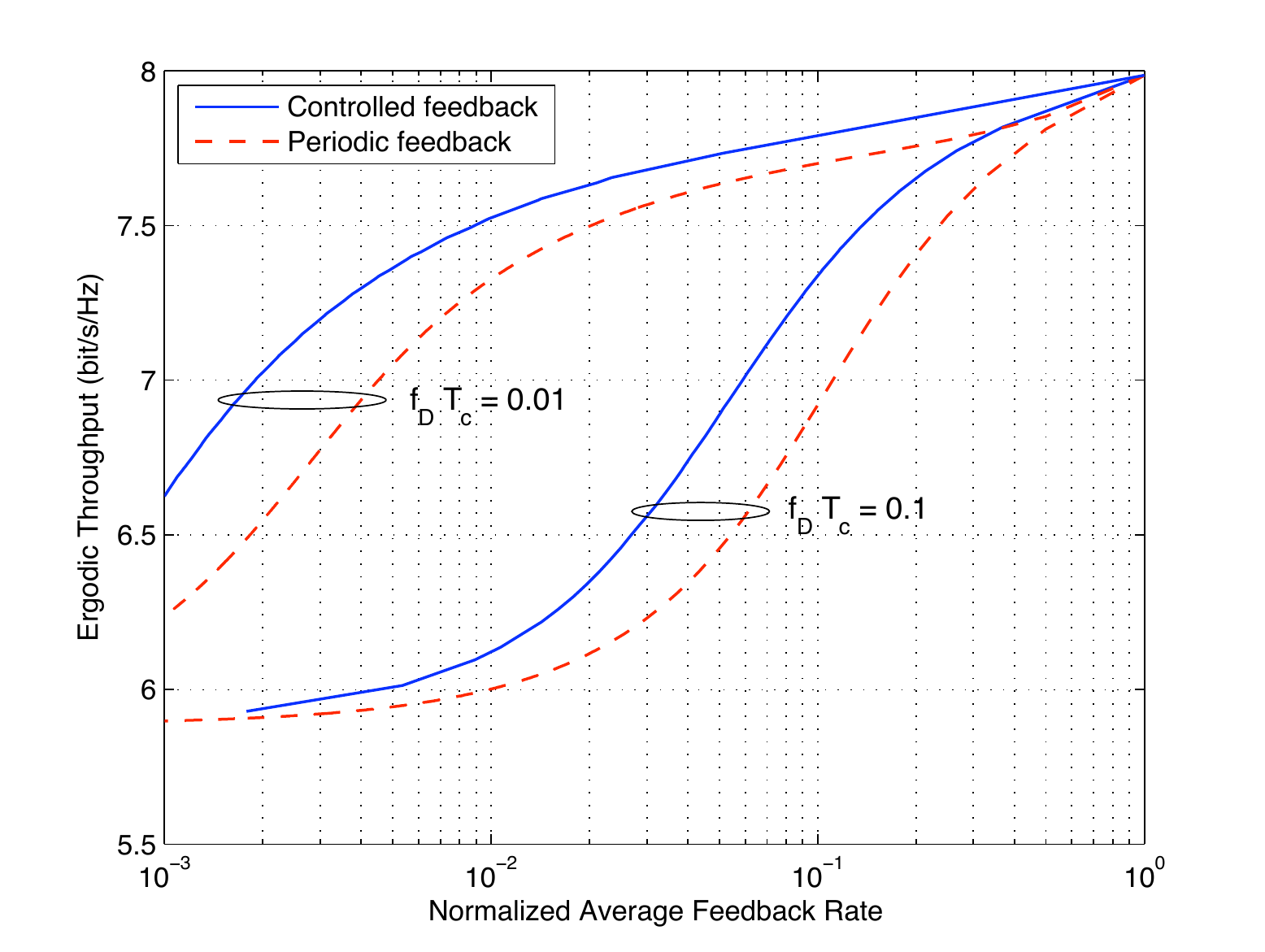}
\caption{Throughput  versus normalized average feedback rate for  the optimally controlled and periodic feedback over the perfect feedback channel. The Doppler frequency is $f_D = \{0.1, 0.01\}/T_c$ and the number of transmit antenna $L=3$. }
\label{Fig:ThputVsFbCost}
\end{figure}

Fig.~\ref{Fig:NetThput:ImpChan} displays the curves of net throughput versus feedback price for the \emph{perfect} and \emph{finite-rate} (quantized) feedback channels. The Doppler frequency is $f_D = 0.1/T_c$ and the number of transmit antennas $L=3$. The codebook used for quantizing feedback CSI has the size of $|\mathcal{F}|=16$ and is constructed using Lloyd's algorithm \cite{Lau:MIMOBlockFadingFbLinkCapConst:04,Xia:AchieveWelchBound:05}.  As observed from Fig.~\ref{Fig:ThputVarAnt}, feedback quantization reduces net throughput slightly. This loss is larger for smaller $\alpha$ (more frequent feedback) and vice versa.

\begin{figure}[t]
\centering\includegraphics[width=11cm]{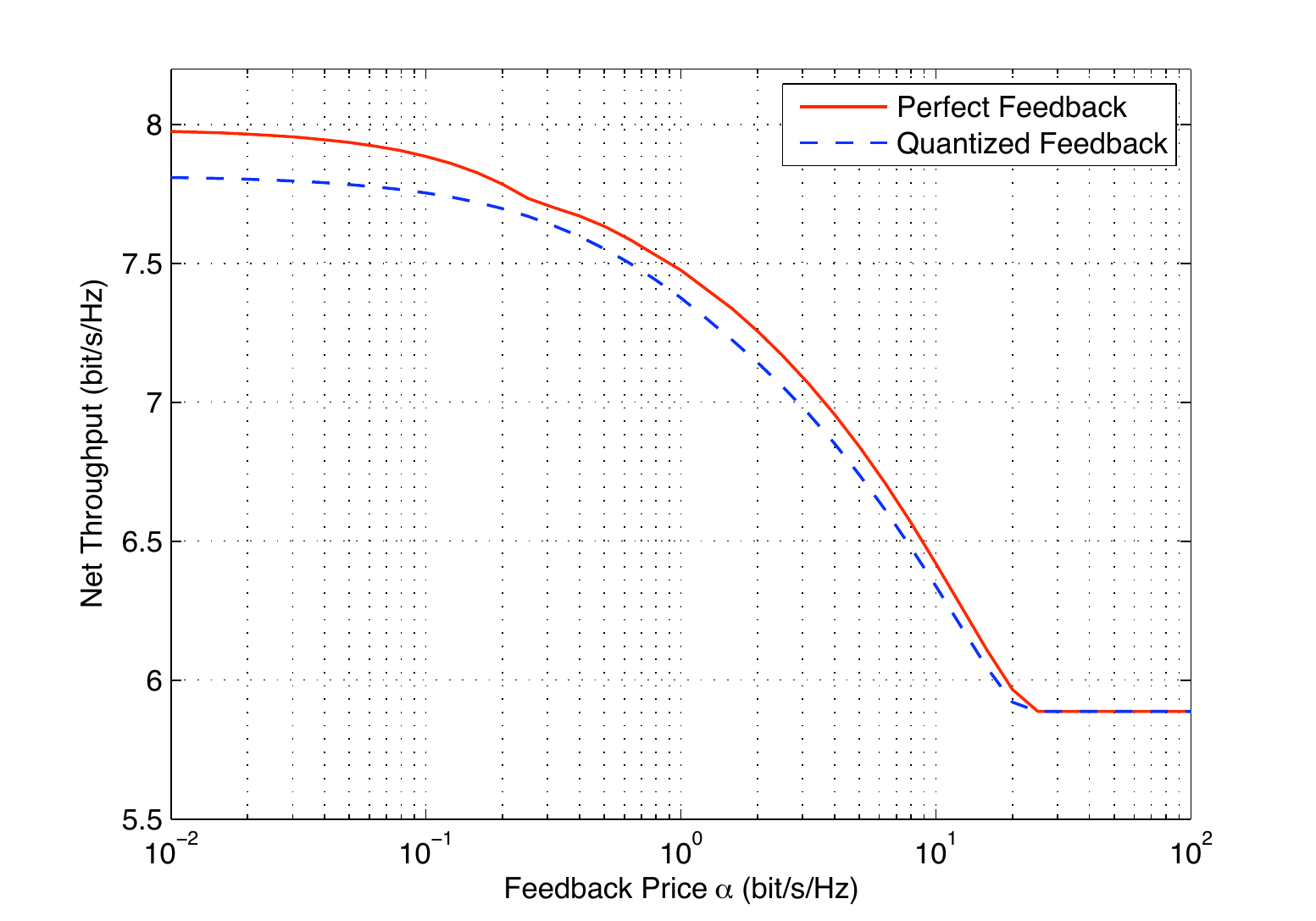}
\caption{Net throughput versus feedback price for  the optimally controlled feedback over the perfect and quantized feedback channels. The Doppler frequency is  $f_D = 0.1/T_c$; the number of transmit antennas $L=3$; the codebook used for quantizing feedback CSI has the size of $|\mathcal{F}|=16$.  }
\label{Fig:NetThput:ImpChan}
\end{figure}

\begin{figure}
\centering\includegraphics[width=11cm]{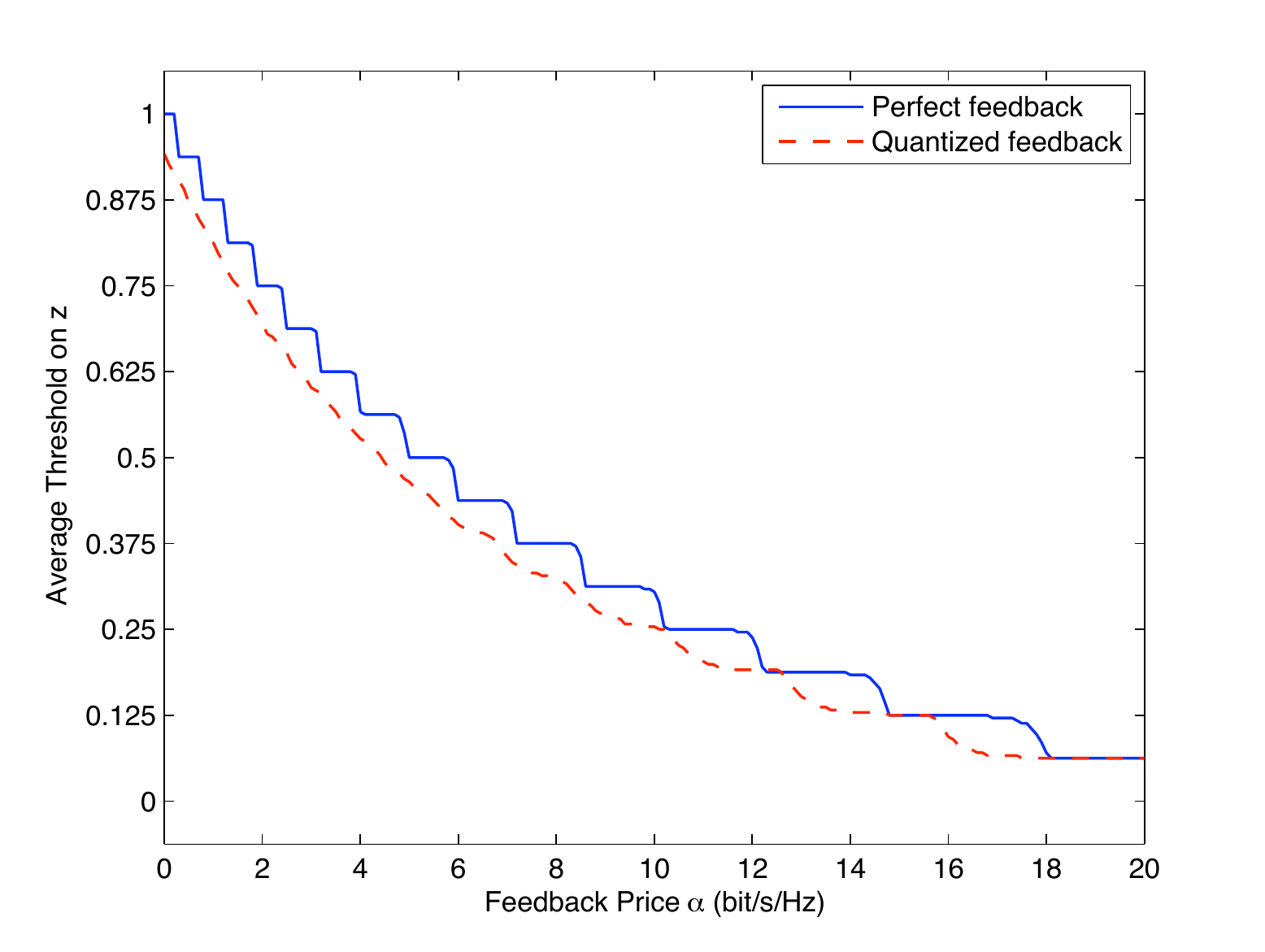}
\caption{Average threshold on $\hat{z}$ versus feedback price $\alpha$  for  the optimally controlled feedback over the perfect and quantized feedback channels. The Doppler frequency is  $f_D = 0.1/T_c$; the number of transmit antennas $L=3$; the codebook used for quantizing feedback CSI has the size of $|\mathcal{F}|=16$. }
\label{Fig:ThresholdVsPrice}
\end{figure}

The optimal control policies computed in simulation using policy iteration \cite{Bertseka:ConvergeDiscretDynamicProg:75} have the same threshold structure as predicted by analysis in the preceding sections. This also validates Assumption~\ref{AS:SD} on the channel  temporal correlation. The thresholds for these policies are observed to be insensitive to the variation on channel gain $\hat{g}$ (cf. Remark 4) on Theorem~\ref{Theo:Policy:Quant}). For this reason, the feedback threshold on $\hat{z}$ is averaged over the range of $\hat{g}$ and plotted against the feedback price $\alpha$ in Fig.~\ref{Fig:ThresholdVsPrice}, where both perfect and finite-rate feedback channels are considered.  The simulation parameters follow those for Fig.~\ref{Fig:ThputVarAnt}. As observed from Fig.~\ref{Fig:ThresholdVsPrice}, the average feedback threshold for the perfect feedback channel is $1$ for $\alpha = 0$, corresponding to feedback for every time slot; the threshold converges to zero as $\alpha$ increases. Fig.~\ref{Fig:ThresholdVsPrice} shows that feedback quantization reduces the feedback threshold slightly, implying less frequent feedback. Moreover, for $\alpha = 0$, the average feedback threshold for quantized feedback is smaller than one, agreeing with the remark on Corollary~\ref{Cor:Policy:QuantFb}.  Last, note that the humps on the curves in Fig.~\ref{Fig:ThresholdVsPrice} are caused by quantizing the controller's state space.

\section{Conclusion}\label{Section:Conclusion}
In this paper, we have proposed the approach of controlling feedback for maximizing net throughput of transmit beamforming systems. The optimal control policy has been proved to be of the threshold type. Under this policy, feedback is performed when the angle between transmit beamformer and the channel exceeds a threshold, which varies with the channel power. The threshold-type optimal policy has been shown to apply to both quantized and continuous controller inputs and both perfect and finite-rate feedback channels.  Feedback quantization has been found to decrease net throughput similarly as increasing the feedback price. As observed from simulation results, the optimal feedback control contributes significant net throughput gains without requiring additional bandwidth or  antennas. 

The work opens several issues for future investigation. First, the closed-form expression of the optimal feedback control policy can be derived by making additional assumptions on channel statistics. This allows direct policy computation rather than using the more complicated policy iteration method. Second, the  controlled feedback approach can be extended to other types of multi-antenna systems with feedback such as precoded spatial multiplexing or multiuser MIMO. Feedback in these systems supports multiple  operations such as spatial multiplexing, interference avoidance, and scheduling.  As a result,  the computation and analysis of optimal control policies are more challenging  than those for single-user transmit beamforming considered in this paper. Last, considering bursty data makes it necessary to jointly control the forward-link queue and CSI feedback. Addressing this issue by extending the approach in \cite{Berry:CommFadingChanDelayConst:2002} can establish an optimal tradeoff relation between feedback overhead, transmission power and queueing delay.

\appendix
\subsection{Proof of Lemma~\ref{Lem:MonoProd}}\label{App:MonoProd}
Let $K$ denote the height of $\bv$ and $\bA$. Define $\Delta v_\ell := [\bv]_\ell - [\bv]_{\ell-1}$ with $\bv_{-1} :=0$. Note that $\Delta v_\ell\geq 0$ for all $\ell$ since $\bv$ is monotone. Using the above definition, we can write 
\begin{equation}
[\bA^\dagger \bv]_n = \sum_{\ell=0}^{K-1} [\bv]_\ell [\bA]_{\ell, n} = \sum_{r=0}^{K-1}\sum_{\ell=r}^{K-1} \Delta v_r [\bA]_{\ell, n}.\label{Eq:proof:2}
\end{equation}
It follows that for $n_1\geq n_2$
\begin{eqnarray}
[\bA^\dagger \bv]_{n_1} -[\bA^\dagger \bv]_{n_2}
&=&  \sum_{r=0}^{K-1} \Delta v_r \l(\sum_{\ell=r}^{K-1}[\bA]_{\ell, n_1} - \sum_{\ell=r}^{K-1}[\bA]_{\ell, n_2}\r) \nn\\
&\overset{(a)}{\geq}& 0 \nn
\end{eqnarray}
where (a) follows from the monotonicity of $\bA$ and that $\Delta v_r\geq 0\ \forall \ r$.  The completes the proof of $1)$. 

For $n_1\geq n_2$, the difference between the $n_1$th and $n_2$th elements of the $k$th row of $\bB\bC$ is 
\begin{eqnarray}
\sum_{\ell}[\bB]_{k,\ell}[\bC]_{\ell, n_1} - \sum_{\ell}[\bB]_{k,\ell}[\bC]_{\ell, n_2} &=& \sum_{\ell}[\bB]_{k,\ell}([\bC]_{\ell, n_1} - [\bC]_{\ell, n_2})\nn\\
&\overset{(b)}{\geq}& 0
\end{eqnarray}
where $(b)$ follows from the monotonicity of each row of $\bC$ and that the elements of $\bB$ are nonnegative. Then $2)$ follows from the above inequality.

\subsection{Proof of Lemma~\ref{Lemma:Monotone}}\label{App:Monotone}
Consider fixed $m$, $n_1$ and  $n_2$ with $n_1\geq n_2$  and a nonnegative function $q(m, n)$ that increases monotonically with $n$. For instance, the all-zero function is a suitable choice. Based on the value iteration in \eqref{Eq:ValIt}, to prove the lemma,   it is sufficient to show that $(\mathsf{F} q)(m,n)$ is also a monotonically increasing function of $n$ given $m$.  Define $\mu'$ by 
\begin{equation}
(\mathsf{F} q)(m, n_2)=G(\bar{g}_m, \bar{z}_{n_2}, \mu') +\beta\sum_{k,\ell}q(k,\ell)\tilde{P}_{k,m}P_{\ell, n_2}(\mu'). \nn
\end{equation}
Define the matrix $\bQ$ with $[\bQ]_{k, \ell} = q(k,\ell)$. The above equation can be rewritten as 
\begin{equation}
(\mathsf{F} q)(m, n_2)=G(\bar{g}_m, \bar{z}_{n_2}, \mu') +\beta[\tilde{\bP}^\dagger \bQ  \bP(\mu')]_{m,n_2}. \label{Eq:App:DP}
\end{equation}
Assume $\mu'=1$. Then from \eqref{Eq:DPOperator} and \eqref{Eq:App:DP} 
\begin{eqnarray}
(\mathsf{F} q)(m, n_1) -(\mathsf{F} q)(m, n_2) &\geq&G(\bar{g}_m, \bar{z}_{n_1}, 1)+\beta[\tilde{\bP}^\dagger \bQ  \bP(1)]_{m,n_1}-G(\bar{g}_m, \bar{z}_{n_2}, 1) -\beta[\tilde{\bP}^\dagger \bQ  \bP(1)]_{m,n_2}\nn\\
&\overset{(a)}{=}&0\nn
\end{eqnarray}
where $(a)$ follows from that both $G(\bar{g}_m, \bar{z}_{n}, 1)$ and $P_{\ell, n}(1)$ are independent of $n$. 
Next, assume $\mu'=0$. It follows that
\begin{eqnarray}
(\mathsf{F} q)(m, n_1) -(\mathsf{F} q)(m, n_2) &\geq&G(\bar{g}_m, \bar{z}_{n_1}, 0)+\beta[\tilde{\bP}^\dagger \bQ  \bP(0)]_{m,n_1}-G(\bar{g}_m, \bar{z}_{n_2}, 0) -\beta[\tilde{\bP}^\dagger \bQ  \bP(0)]_{m,n_2}\nn\\
&\overset{(b)}{\geq}& \beta[\tilde{\bP}^\dagger \bQ  \bP(0)]_{m,n_1} -\beta[\tilde{\bP}^\dagger \bQ  \bP(0)]_{m,n_2}\nn\\
&\overset{(c)}{\geq}& 0 
\end{eqnarray}
where $(b)$ holds since $G(\bar{g}_m, \bar{z}_n, \mu')$ is a monotonically increasing function of $\bar{z}_n$. $(c)$ is due to that  the matrix $\tilde{\bP}^\dagger \bQ  \bP(0)$ has monotone rows, which results from Lemma~\ref{Lem:MonoProd} and that $\bP(0)$ is monotone and $\bQ$ comprises monotone rows. 
Combining above results proves the monotonicity of $\mathsf{F} q$ and hence $\hat{J}^\star_\beta$. 

Proving the monotonicity of $\sum_{k,\ell} \hat{J}^\star_\beta(k,\ell)\tilde{P}_{k,m}P_{\ell, n}$ uses that of  $\hat{J}^\star_\beta$ as shown above. The proof procedure is similar to  the above steps and thus omitted. 

\subsection{Proof of Theorem~\ref{Theo:Policy:Quant}}\label{App:Policy:Quant}
Consider the optimal policy $\hat{\mathcal{P}}_\beta^\star$ for maximizing the discounted reward. To simply notation, define 
\begin{equation}\begin{aligned}
&\Delta J(m,n) &:=& \log_2(1+P\hat{g}_m) -\log_2(1+P\hat{g}_m\hat{z}_n)- \alpha + \sum_{k,\ell}\hat{J}^\star_\beta(k,\ell) \tilde{P}_{k,m}P_{\ell, n}(1)-\\
&&&\sum_{k,\ell}\hat{J}^\star_\beta(k,\ell) \tilde{P}_{k,m}P_{\ell, n}(0).
\end{aligned}
\label{Eq:PF:DJ}
\end{equation}
From Bellman's equation $\hat{J}_\beta^\star = \mathsf{F}\hat{J}_\beta^\star$ with $\mathsf{F}$ defined in \eqref{Eq:DPOperator},  $\mathcal{P}^\star_\beta(\bar{g}_m,\bar{z}_n) = 1$ if $\Delta J(m, n)>0$ or otherwise $\mathcal{P}^\star_\beta(\bar{g}_m,\bar{z}_n) = 0$. Consider 
$(m_0, n_0)$ such that  $\Delta J(m_0,n_0) \leq 0$. For any $n$ with  $n_0 \leq n\leq N-1$, given that $P_{\ell, n}(1)$ is independent of $n$,  it follows from \eqref{Eq:PF:DJ} and Lemma~\ref{Lemma:Monotone} that $\Delta J(m_0, n)\leq 0$. Therefore for each $\hat{g}\in \hat{\mathcal{G}}$, there exists the matching  $\hat{\eta} \in \hat{\mathcal{Z}}$ such that $\hat{\mathcal{P}}^\star_\beta(\hat{g}, \hat{z})=0\ \forall \ \hat{z}\geq \hat{\eta}$ and $\hat{\mathcal{P}}^\star_\beta(\hat{g}, \hat{z})=1\ \forall \  \hat{z}<\hat{\eta}$ if $\hat{\eta} > \bar{z}_0$. Defining  $\hat{y}$ as the mapping from $\hat{g}$ to $\hat{\eta}$ proves that the optimal policy $\hat{\mathcal{P}}^\star_\beta$ is of the threshold type with $\hat{y}$ being the threshold function. 

Next, the bounds in \eqref{Eq:Threshold:Bounds} are proved as follows. The upper bound is trivial given that $0\leq \hat{z}\leq 1$. By the above definition,  $y(\hat{g})$ can be written as 
\begin{equation}\label{Eq:Threshold}
\hat{y}(\hat{g}) = \min_{\hat{z}\in\mathcal{Z}} \hat{z}\quad \textrm{s.t.}\quad \Delta J(\hat{g}, \hat{z}) \leq 0
\end{equation}
where by abuse of notation $\Delta J(\bar{g}_m, \bar{z}_n) := \Delta J(m, n)$. 
From\eqref{Eq:PF:DJ} and Lemma~\ref{Lemma:Monotone} 
\begin{equation}
\Delta J(m,n) \geq \underbrace{\log_2(1+P\bar{g}_m) -\log_2(1+P\bar{g}_m\bar{z}_n)- \alpha}_{\Delta J^-(m,n)}. \label{Eq:PF:DJ:a}
\end{equation}
It follows from \eqref{Eq:PF:DJ} and  Lemma~\ref{Lemma:Monotone} that $\Delta J(m,n)$ is a monotonically increasing function of $n$. So is $\Delta J^-(m,n)$ from its definition. Therefore, from \eqref{Eq:Threshold} and \eqref{Eq:PF:DJ:a}
\begin{equation}\label{Eq:Threshold:Ineq}
\hat{y}(\hat{g}) \geq \hat{\gamma}
\end{equation}
where 
\begin{equation}\label{Eq:Threshold:LB:a}
\hat{\gamma} = \min_{\hat{z}\in\mathcal{Z}} \hat{z}\quad \textrm{s.t.}\quad \Delta J^-(d(\hat{g}), d(\hat{z})) \leq 0.
\end{equation}
Using \eqref{Eq:Threshold:Ineq} and solving for $\hat{\gamma}$ using \eqref{Eq:Threshold:LB:a} proves that the lower bound in \eqref{Eq:Threshold:Bounds} holds for $\hat{\mathcal{P}}_\beta^\star$.

Given $0<\beta_0<1$, the properties for  $\hat{\mathcal{P}}^\star_\beta$ as proved above hold   for any $\beta\in [\beta_0, 1)$. These properties must also exist for  the optimal policy giving   $\hat{J}^\star = \lim_{\beta\rightarrow 1}(1-\beta)\hat{J}^\star_\beta$ \cite{Bertsekas07:DynamicProg}. This completes the proof. 

\subsection{Proof of Corollary~\ref{Cor:Policy:ExtremeFb}}\label{App:Policy:ExtremeFb}
The first claim is obviously valid since for $\alpha = \infty$ any feedback instant causes net throughput to be $-\infty$ and thus the optimal feedback controller should block feedback by using the threshold $\hat{y} = 0$. The second claim holds since for $\alpha = 0$, $G(\hat{g}, \hat{z}, 0) \leq G(\hat{g}, \hat{z}, 1) \ \forall \ (\hat{g}, \hat{z})\in \hat{X}$. Thus feedback should be performed in every time slot, corresponding to fixed $\hat{y} = 1$. 

\subsection{Proof of Proposition~\ref{Prop:Policy:Cont}}\label{App:Policy:Cont}
A stationary  feedback policy partitions the continuous state space $\mathcal{X}$ into two sets $\mathcal{W}$ and $\mathcal{W}^c$ such that $\mu=1 \  \forall \ (g,z)\in\mathcal{W}$ and $\mu=0 \  \forall \ (g,z)\in\mathcal{W}^c$. To simplify notation, define $\mathcal{W}(g) := \mathcal{W}\cap (\{g\}\times\mathcal{Z})$ and $ \mathcal{W}^c = \mathcal{W}\backslash \mathcal{W}(g)$. Moreover, let $f(z\mid \mathcal{W})$ denote the PDF of $z$ that depends on the set (policy) $\mathcal{W}$. Note that the PDF $f(g)$ of $g$ is independent of $\mathcal{W}$. 

To apply Lemma~\ref{Lem:Converge}, we design a genie-aided dummy feedback-control system similar to the current one but with a bounded continuous state space. In the virtual system, the encoder is   
shut down by the genie  whenever $g> \tilde{g}_{M-1}$ or otherwise turned on. The average reward for this system is $I := J(G')$
where the reward-per-stage $G'$ is defined in terms of $G$ in \eqref{Eq:RewardPerStage} as
\begin{equation}
G'(g, z, \mu) = \l\{
\begin{aligned}
&G(g,z, \mu),&& g\geq \tilde{g}_{M-1}\\
&0,&& \textrm{otherwise}. 
\end{aligned}
\r.
\end{equation}
The maximum reward $I^\star$ can be written as 
\begin{equation}\label{Eq:VirRew}\begin{aligned}
\!\!\!\!I^\star(M) &=&\!\!\!\!\!\!   \max_{\mathcal{W}\in\mathcal{G}\times\mathcal{Z}} \int\limits^{\tilde{g}_{M-1}}_0 \l\{\int\limits_{\mathcal{W}(g)}G'(g, z, 1) f(z\mid \mathcal{W})dz +\!\!\!\!
 \int\limits_{\mathcal{W}^c(g)}G'(g, z, 0) f(z\mid \mathcal{W})dz\r\}f(g)dg.
\end{aligned}
\end{equation}
Next, the reward $I^\star(M)$ is shown to converge to $J^\star$ as $N$ increases. Similar to \eqref{Eq:VirRew},
\begin{eqnarray}
J^\star 
&=& \max_{\mathcal{W}\in\mathcal{G}\times\mathcal{Z}} \int\limits^{\infty}_0 \l\{\int\limits_{\mathcal{W}(g)}G'(g, z, 1) f(z\mid \mathcal{W})dz +
 \int\limits_{\mathcal{W}^c(g)}G'(g, z, 0) f(z\mid \mathcal{W})dz\r\}f(g)dg\nn\\
 %----------
&\leq&  \max_{\mathcal{W}\in\mathcal{G}\times\mathcal{Z}} \int\limits^{_{\tilde{g}_{M-1}}}_0 \l\{\int\limits_{\mathcal{W}(g)}G'(g, z, 1) f(z\mid \mathcal{W})dz +
 \int\limits_{\mathcal{W}^c(g)}G'(g, z, 0) f(z\mid \mathcal{W})dz\r\}f(g)dg+\nn\\
&&\max_{\mathcal{W}\in\mathcal{G}\times\mathcal{Z}} \int\limits^{\infty}_{\tilde{g}_{M-1}} \l\{\int\limits_{\mathcal{W}(g)}G'(g, z, 1) f(z\mid \mathcal{W})dz +
 \int\limits_{\mathcal{W}^c(g)}G'(g, z, 0) f(z\mid \mathcal{W})dz\r\}f(g)dg\nn\\
 %-----
&\leq& I^\star(M)+\max_{\mathcal{W}\in\mathcal{G}\times\mathcal{Z}}\!\!\!\! \int\limits^{\infty}_{\tilde{g}_{M-1}}\!\! \l\{\log_2(1+Pg)\int\limits_{\mathcal{W}(g)}\!\!\!\! f(z\mid \mathcal{W})dz +
\log_2(1+Pg) \int\limits_{\mathcal{W}^c(g)} \!\!\!\! f(z\mid \mathcal{W})dz\r\}f(g)dg\nn\\
&=& I^\star(M)+\int\limits^{\infty}_{\tilde{g}_{M-1}} \log_2(1+Pg)f(g)dg\nn\\
&\overset{(a)}{\leq} & I^\star(M)+\frac{\E[ \log_2(1+Pg)]}{\tilde{g}_{M-1}}\label{Eq:IJ}
\end{eqnarray}
where $(a)$ is obtained by applying Markov's inequality. Given that $g$ follows chi-square distribution and $\Pr(g\geq  \tilde{g}_{M-1}) = 1/M$, $\tilde{g}_{M-1}\rightarrow\infty$ for $M\rightarrow\infty$. Therefore, since  $I^\star$ and $\E[\log_2(1+Pg)]$ are finite, it follows form $\eqref{Eq:IJ}$ that
\begin{equation}\label{Eq:IJ:Converge}
\lim_{M\rightarrow\infty}|J^\star - I^\star(M)| = 0. 
\end{equation}

Next, consider the approximate  feedback control optimization for the virtual system with a discrete state space. This space, denoted as  $\hat{\mathcal{S}}$, is obtained using a quantization algorithm similar to that in Section~\ref{Section:QuantAlgo}, hence  $\hat{\mathcal{S}} = (\hat{\mathcal{G}}\backslash\{\bar{g}_{M-1}\})\times \mathcal{Z}$. Let $\hat{I}^\star_\beta$ and $\hat{I}^\star$ denote the maximum discounted and average rewards, respectively. For the above approximated problem, the maximum quantization error is given as 
\begin{equation}
d_s = \max_{0\leq m\leq M-2}\max_{0\leq n\leq N-1}\max_{\tilde{g}_m\leq g\leq \tilde{g}_{m+1}}\max_{\tilde{z}_n\leq z\leq \tilde{z}_{n+1}} \sqrt{|g-\bar{g}_m|^2 + |z-\bar{z}_n|^2}.\label{Eq:MaxErr}
\end{equation}
Since  $d_s\rightarrow 0$ as $M,N\rightarrow \infty$ and  using the Lipschitz continuity of the conditional PDF's of $(g, z)$ and the reward-per-stage function, it follows from Lemma~\ref{Lem:Converge} that 
\begin{equation}\label{Eq:II:Converge}
\lim_{M,N\rightarrow\infty}|\hat{I}^\star(M,N) - I^\star| = \lim_{\beta\rightarrow 1}\lim_{N\rightarrow\infty} (1-\beta)|\hat{I}^\star_\beta(\hat{g}, \hat{z}, M, N) - I^\star_\beta(g, z)| = 0
\end{equation}
From \eqref{Eq:IJ:Converge} and \eqref{Eq:II:Converge} and the triangular inequality 
\begin{equation}\label{Eq:IJ:Converge:a}
\lim_{M,N\rightarrow\infty}|J^\star - \hat{I}^\star(N)| \leq \lim_{M,N\rightarrow\infty}\l(|J^\star - I^\star(M)| + |I^\star - \hat{I}^\star(M,N)|\r) = 0. 
\end{equation}
Furthermore, for $M,N\rightarrow\infty$, the results in Theorem~\ref{Theo:Policy:Quant} holds for the virtual system with the state space $\hat{\mathcal{S}}$. This completes the proof.

\subsection{Proof of Proposition~\ref{Prop:AvAward:LimFb}}\label{App:AvAward:LimFb}

\emph{Proof of the inequalities in $1)$ and $2)$:} The second inequality in $1)$ holds since  $G \geq G_\epsilon$ from their definitions in \eqref{Eq:RewardPerStage} and \eqref{Eq:RewardPerStage:Quant}. 
In the sequel, we prove the first inequality in $1)$ based on value iteration \cite{Bertsekas07:DynamicProg}. To this end, consider two nonnegative functions $q_1(\hat{g}, \hat{z})$ and $q_2(\hat{g},\hat{z})$ that have the support $\hat{\mathcal{G}}\times \hat{\mathcal{Z}}$ and monotonically increase with $\hat{z}$. Furthermore, $q_1(\hat{g}, \hat{z})\geq q_2(\hat{g},\hat{z})\ \forall \ (\hat{g}, \hat{z})\in \hat{\mathcal{G}}\times \hat{\mathcal{Z}}$, which is  represented by $q_1 \geq q_2$ for simplicity. Following the similar procedure as in the proof of Lemma~\ref{Lemma:Monotone}, it can be  shown that the functions $\mathsf{F}(G, \bP)q_1$ and  $\mathsf{F}(G, \bP_{\epsilon})q_2$ both monotonically increases with $\hat{z}$, where $\mathsf{F}$ is in \eqref{Eq:DPOperator}.

Next, it is shown that $\mathsf{F}(G, \bP)q_1 \geq \mathsf{F}(G, \bP_{\epsilon})q_2$. Let $\mu_a$ and $\mu_b$ denote the control decisions that satisfy 
\begin{eqnarray}
[\mathsf{F}(G, \bP)q_1](\bar{g}_m, \bar{z}_n) &=& G(\bar{g}_m, \bar{z}_n, \mu_a) + \sum_{k,\ell}q_1(\bar{g}_k,\bar{z}_{\ell})[\tilde{\bP}]_{k,m} [\bP(\mu_a)]_{\ell, n}\\
\l[\mathsf{F}(G, \bP_{\epsilon})q_2\r](\bar{g}_m, \bar{z}_n) &=& G(\bar{g}_m, \bar{z}_n, \mu_b) + \sum_{k,\ell}q_2(\bar{g}_k,\bar{z}_{\ell})[\tilde{\bP}]_{k,m} [\bP_{\epsilon}(\mu_b)]_{\ell, n}
\end{eqnarray}
If $\mu_a = \mu_b = 1$,  
\begin{eqnarray}
[\mathsf{F}(G, \bP)q_1-\mathsf{F}(G, \bP_{\epsilon})q_2](\bar{g}_m, \bar{z}_n) &=& \log_2(1+P\bar{g}_m) + \sum_{k,\ell}q_1(\bar{g}_k,\bar{z}_{\ell})[\tilde{\bP}]_{k,m} [\bP(1)]_{\ell, n}\nn\\
&& - \underset{\epsilon}{\E}[\log_2(1+P\bar{g}_m\epsilon)] - \sum_{k,\ell}q_2(\bar{g}_k,\bar{z}_{\ell})[\tilde{\bP}]_{k,m} [\bP_\epsilon(1)]_{\ell, n} \nn\\
&\geq& \sum_{k,\ell}q_1(\bar{g}_k,\bar{z}_{\ell})[\tilde{\bP}]_{k,m} [\bP(1)]_{\ell, n} - \sum_{k,\ell}q_2(\bar{g}_k,\bar{z}_{\ell})[\tilde{\bP}]_{k,m} [\bP_\epsilon(1)]_{\ell, n} \nn\\
&\overset{(a)}{\geq}& \sum_{k,\ell}[\tilde{\bP}]_{k,m}q_2(\bar{g}_k,\bar{z}_{\ell}) \l\{[\bP(1)]_{\ell, n} - [\bP_\epsilon(1)]_{\ell, n}\r\}. \label{Eq:OpDiff}
\end{eqnarray}
where $(a)$ follows from $q_1 \geq q_2$. 
For $0\leq \ell_0\leq N-1$ and from \eqref{Eq:TxProb:z} and \eqref{Eq:TransProb:Eps}
\begin{eqnarray}
\sum_{\ell = \ell_0}^{N-1}[\bP(1)]_{\ell, n} - \sum_{\ell = \ell_0}^{N-1}[\bP_\epsilon(1)]_{\ell, n} &=& \int_{\hat{z}=\tilde{z}_{\ell_0}}^1  \check{f}(\hat{z}\mid \hat{z}'=1) d\hat{z} - \int_{\hat{z}=\tilde{z}_{\ell_0}}^1  \underset{\epsilon}{\E}[\check{f}(\hat{z}\mid \hat{z}'=\epsilon)] d\hat{z}\nn
\end{eqnarray}
\begin{eqnarray}
&=& \underset{\epsilon}{\E}\l[\int_{\hat{z}=\tilde{z}_{\ell_0}}^1  \check{f}(\hat{z}\mid \hat{z}'=1) d\hat{z} - \int_{\hat{z}=\tilde{z}_{\ell_0}}^1  \check{f}(\hat{z}\mid \hat{z}'=\epsilon) d\hat{z}\r]\nn\\
&\overset{(b)}{\geq}& 0 \label{Eq:OpDiff:a}. 
\end{eqnarray}
where $(b)$ is due to Assumption~\ref{AS:SD}. 
Using \eqref{Eq:OpDiff} and \eqref{Eq:OpDiff:a} and following the similar steps as in the proof for Lemma~\ref{Lemma:Monotone} leads to that $\mathsf{F}(G, \bP)q_1 \geq \mathsf{F}(G, \bP_\epsilon)q_2$ if $\mu_a = \mu_b = 1$. If $\mu = \mu = 0$, since $\bP_\epsilon(0) = \bP(0)$ and $q_1 \geq q_2$, 
\begin{eqnarray}
[\mathsf{F}(G, \bP)q_1-\mathsf{F}(G, \bP_{\epsilon})q_2](\bar{g}_m, \bar{z}_n) &=& \sum_{k,\ell}[q_1(\bar{g}_k,\bar{z}_{\ell})-q_2(\bar{g}_k,\bar{z}_{\ell})][\tilde{\bP}]_{k,m} [\bP(0)]_{\ell, n}\nn\\
&\geq& 0. 
\end{eqnarray}
From \eqref{Eq:DPOperator},  the values of $[\mathsf{F}(G, \bP)q_1 - \mathsf{F}(G, \bP_{\epsilon})q_2]$ for  $(\mu = 1, \mu = 0)$ and $(\mu = 0, \mu = 1)$ are larger than those for  $(\mu = \mu = 0)$ and $(\mu = \mu = 1)$, respectively. Combining the above results shows that $\mathsf{F}(G, \bP)q_1\geq =\mathsf{F}(G, \bP_{\epsilon})q_2$. 

Consequently,  $\hat{J}^\star_\beta(G, \bP) \geq \hat{J}^\star_\beta(G, \bP_\epsilon)$ since by value iteration 
\begin{equation}
\hat{J}^\star_\beta (G, \bP)=\lim_{n\rightarrow\infty}\mathsf{F}^n(G, \bP)q_1 \quad\textrm{and}\quad  \hat{J}^\star_\beta(G, \bP_\epsilon)=\lim_{n\rightarrow\infty}\mathsf{F}^n(G, \bP_\epsilon)q_2. 
\end{equation}
As  $\hat{J}^\star = \lim_{\beta\rightarrow 1}(1-\beta)\hat{J}^\star_\beta$, the first inequality in 1) of the proposition statement is proved. 

The inequalities in $2)$ of the proposition statement can be proved also using the above procedure.

\emph{Proof of the inequalities in $3)$ and $4)$:} The inequalities in $3)$ and $4)$ can be proved using similar procedures. Thus we focus on proving that in $3)$. The reward-per-stage function in \eqref{Eq:RewardPerStage:Quant} can lower bounded below, where the fourth argument is the weighted feedback price
\begin{eqnarray}
G_\epsilon(\hat{g}, \hat{z}, 1, \alpha) &=&\underset{\epsilon}{ \E}\l[\log_2\l(1/\epsilon + P\hat{g}\r)\r] +\underset{\epsilon}{\E}[\log_2\epsilon]-\alpha \nn\\
&\overset{(a)}{\geq}& \log_2\l(1 + P\hat{g}\r) + \underset{\epsilon}{\E}[\log_2\epsilon] -\alpha\nn\\
&=& G(\hat{g}, \hat{z}, 1, \alpha - \underset{\epsilon}{\E}[\log_2\epsilon]) \label{Eq:Reward:LB}
\end{eqnarray}
where $(a)$ uses $\delta \leq 1$. Combining \eqref{Eq:Reward:LB} and $G_\epsilon(\hat{g}, \hat{z}, 0) = G(\hat{g}, \hat{z}, 0)$ gives the desired result.

\renewcommand{\baselinestretch}{1.3}
\bibliographystyle{ieeetr}

\end{document}